\newtheorem{theorem}{Theorem}
\newtheorem{lemma}{Lemma}
\newtheorem{remark}{Remark}
\algrenewcommand{\algorithmicrequire}{\textbf{Input:}}
\algrenewcommand{\algorithmicensure}{\textbf{Output:}}
\newcommand{\thickhline}{%
    \noalign {\ifnum 0=`}\fi \hrule height 1pt
    \futurelet \reserved@a \@xhline
}
\newcolumntype{"}{@{\hskip\tabcolsep\vrule width 1pt\hskip\tabcolsep}}
\begin{document}
\title{Fluid Antenna Systems: A Geometric Approach to Error Probability and Fundamental Limits}

\author{Xusheng Zhu,
            Kai-Kit Wong, \IEEEmembership{Fellow,~IEEE},
            Hao Xu,
            Han Xiao,
            Hanjiang Hong,\\
            Hyundong Shin, \emph{Fellow, IEEE}, and
            Yangyang Zhang
\vspace{-5mm}

\thanks{(\emph{Corresponding author: Kai-Kit Wong}).}
\thanks{The work of X. Zhu and K. K. Wong is supported by the Engineering and Physical Sciences Research Council (EPSRC) under grant EP/W026813/1.}

\thanks{X. Zhu, K. K. Wong, and H. Hong are with the Department of Electronic and Electrical Engineering, University College London, London, United Kingdom. K. K. Wong is also affiliated with the Department of Electronic Engineering, Kyung Hee University, Yongin-si, Gyeonggi-do 17104, Korea (e-mail: \{xusheng.zhu;kai-kit.wong;hanjiang.hong\}@ucl.ac.uk).}
\thanks{H. Xu is with the National Mobile Communications Research Laboratory, Southeast University, Nanjing 210096, China (e-mail: hao.xu@seu.edu.cn)}
\thanks{H. Xiao is School of Information and Communications Engineering, Xi'an Jiao Tong University, China (e-mail: hanxiaonuli@stu.xjtu.edu.cn).}
\thanks{H. Shin is with the Department of Electronics and Information Convergence Engineering, Kyung Hee University, Yongin-si, Gyeonggi-do 17104, Republic of Korea (e-mail: hshin@khu.ac.kr).}
\thanks{Y. Zhang is with Kuang-Chi Science Limited, Hong Kong SAR, China (e-mail: yangyang.zhang@kuang-chi.com).}
}

\maketitle
\begin{abstract}
The fluid antenna system (FAS) concept is an emerging paradigm that promotes the utilization of the feature of shape and position reconfigurability in antennas to broaden the design of wireless communication systems. This also means that spatial diversity can be exploited in an unconventional way. However, a rigorous framework for error probability analysis of FAS under realistic spatially correlated channels has been lacking. In this paper, we fill this gap by deriving a tight, closed-form asymptotic expression for the symbol error rate (SER) that establishes the fundamental scaling law linking system's SER to the channel's spatial correlation structure. A key insight of our analysis is that the achievable diversity gain is governed not by the number of antenna ports, but by the channel's effective rank. To find this critical parameter, we propose a novel dual-pronged approach. First of all, we develop a geometry-based algorithm that extracts distinct performance thresholds from the channel's eigenvalue spectrum. Second, we theoretically prove that the effective rank converges to a fundamental limit dictated solely by the antenna's normalized aperture width. We further establish the equivalence between the threshold identified by the geometric algorithm and the derived theoretical limit, providing rigorous validation for the proposed method. Our effective rank model achieves higher accuracy than existing approaches in the literature. Building on this framework, we offer a complete characterization of diversity and coding gains. The analysis leads to a definitive design insight: FAS performance improvements are fundamentally driven by enlarging the antenna's explorable aperture, which increases the effective channel rank, whereas increasing port density within a fixed aperture yields diminishing returns.
\end{abstract}

\begin{IEEEkeywords}
Fluid antenna system (FAS), symbol error rate (SER), effective rank, spatial diversity.
\end{IEEEkeywords}

\vspace{-2mm}
\section{Introduction}
\subsection{Background}
\IEEEPARstart{T}{o mitigate} the detrimental effects of multipath fading, multiple-input multiple-output (MIMO) antenna systems have become a cornerstone of modern wireless communications \cite{foschini-1996}. By leveraging multiple antennas at both ends, MIMO can significantly enhance link reliability by coding over the spatial domain \cite{Tarokh-1998}, or boost the capacity by creating parallel spatial streams \cite{Raleigh-1998}. MIMO has also transformed into multiuser MIMO in recent years, providing enormous multiplexing gains \cite{Wong-2002,Spencer-2004,Marzetta-2010}. Latest efforts have seen MIMO combining with reconfigurable intelligent surface (RIS) technologies to engineer the propagation environment for a variety of benefits \cite{shoj2022mimo,zhu2023risa,zhu2024per,zhu2025disc}. However, the spatial diversity of MIMO is often dictated by the number of radio frequency (RF) chains due to their static, or fixed antenna configuration.


\subsection{Related Work}
To push MIMO performance beyond the spatial diversity limits imposed by RF chains, recent research has increasingly explored state-of-the-art reconfigurable antenna technologies. Early efforts in this direction can be traced back to 2004 \cite{1367557,Fazel-2008}, where reconfigurability was primarily applied to modify fixed antenna properties within conventional designs. In contrast, Wong {\em et al.}~introduced the fluid antenna system (FAS) concept \cite{Wong-ell2020,wong2022bruce}, which treats the antenna itself as a flexible, hardware-agnostic physical-layer resource. Rather than simply adapting existing elements, FAS leverages shape, position, and aperture-reconfigurability to fundamentally broaden system design and network optimization, thereby inspiring a new class of next-generation reconfigurable antennas, e.g., \cite{Liu-oe2025}. In \cite{Lu-2025}, an explanation from the electromagnetic perspective for FAS was given. Recent studies \cite{new2025at,hongicc2025,wu2024flu} have explored the various opportunities and challenges with FAS communications.

By considering dynamic antenna positioning, FAS allows fine displacements of an antenna's aperture to result in significant improvements in signal strength \cite{wong2020pems,wong2021flns}. To accurately understand the impact of spatial correlation on performance, subsequent efforts have adopted the eigenvalue-based channel model \cite{kham2023anew,new2024flin} to study the diversity order of FAS channels. In \cite{new2024an}, the diversity-multiplexing trade-off for a fluid MIMO system has been investigated. Performance analysis for FAS is usually very difficult because of the intricate relationship of how spatial correlation is linked to the performance metric. In \cite{nosa2024anew}, Ram\'{i}rez-Espinosa {\em et al.}~addressed this by proposing a block spatial-correction model that simplifies the channel model to maintain analytical tractability while being able to approximate the correlation structure reasonably well.

FAS has also found applications in multiuser environments, giving rise to the paradigm of fluid antenna multiple access (FAMA) \cite{kwonglimi202} that can accommodate hundreds of users on the same channel without the need of precoding. This impressive result, however, came with a strong assumption of extremely fast port switching of the FAS at each user. This then motivates the variants such as slow FAMA \cite{Wong-sfama2023,wong2022extrm,xu2024rev}, opportunistic FAMA \cite{wong2023oppf,Waqar-2024}, and the compact ultra-massive antenna-array (CUMA) receiver architecture \cite{Wong-cuma2024,won2023trsm}. Moreover, there have been studies showing great potential of FAS in improving non-orthogonal multiple access (NOMA) \cite{new2023flac,he2024mov,zheng2024fas} and rate-splitting multiple access (RSMA) \cite{Farshad-2025rsma} networks.

Indeed much progress has been made recently in integrating FAS into other technologies. For example, \cite{Hong-2025fasofdm,Hong-2025famaofdm} confirmed that FAS is still effective under orthogonal frequency division multiplexing (OFDM) settings. The synergy of RIS and FAS-aided users was investigated in \cite{cha2024onms,yao2025rix}. Intriguingly, the FAS concept can be integrated into the elements of RIS, meaning that each element is equipped with position reconfigurability. This has led to the fluid RIS (FRIS) system in \cite{FRmayper,han2025fl,han2025fr}, and the fluid integrated reflecting and emitting surfaces (FIRES) in \cite{Farshad-fires2025}, the latter of which aims to provide full coverage.

\subsection{Motivation and Contributions}
Though the potential of FAS is widely recognized, a major gap exists in the fundamental understanding of its error performance. Previous studies have primarily focused on outage probability, which is relatively easier to analyze and regarded as a proxy to other performance metrics. However, reliability is more accurately evaluated using the symbol error rate (SER).\footnote{When the modulation order is equal to $2$, the SER is degenerated to the bit error rate (BER).} Also, it remains unclear how the interplay between the number of ports, the physical aperture size, and spatial correlation truly dictates the achievable diversity gain. This lack of a rigorous analytical framework for error performance not only hinders the ability to accurately predict system behavior but prevents the formulation of clear and effective design principles.

To address these critical gaps, this paper provides the first comprehensive error probability analysis for FAS. The main contributions are summarized as follows:
\begin{itemize}
\item First, we derive a tight, closed-form asymptotic expression for the SER that applies to a general class of coherent modulation schemes. This result is pivotal as it establishes, for the first time, the fundamental scaling law that governs the relationship between system performance and the channel's spatial correlation structure. By providing a direct analytical link to the system's physical parameters, this expression moves beyond qualitative descriptions and enables precise, and quantitative performance evaluation in the high signal-to-noise ratio (SNR) regime.

\item We rigorously establish that the diversity of FAS is fundamentally governed not by the nominal number of ports, but by the effective rank of the channel correlation matrix. To determine this critical parameter, we propose a novel, dual-pronged approach that synergizes practical, data-driven analysis with fundamental theory. This includes a practical, geometry-based algorithm that autonomously identifies three distinct performance thresholds from the channel's eigenvalue spectrum, offering a granular view of the channel's signal, transition, and noise subspaces. Complementing this, a theoretical derivation proves that the effective rank converges to a fundamental limit determined solely by the normalized aperture width.

\item We crucially demonstrate the equivalence between the principal threshold identified by our geometric algorithm and the derived theoretical limit. This validation is significant since it indicates that the empirically observed knee in the eigenvalue spectrum is not an arbitrary artifact, but rather corresponds to a fundamental physical limit of the channel. Furthermore, by comparing against established benchmarks, we show that our effective rank model is more accurate than existing methods in the literature, offering a more precise tool for system analysis, particularly in large-aperture scenarios where other models falter.

\item Besides, based on the validated effective rank framework, we provide a complete characterization of the diversity and coding gains, decoupling the effects of the physical channel from the signaling scheme. The analysis culminates in a definitive design directive, grounded in our theoretical and simulation results: performance enhancement is fundamentally achieved by expanding the antenna's explorable aperture, which directly increases the effective channel rank and achievable diversity gain. In contrast, increasing the density of ports within a constrained space is shown to provide diminishing returns, yielding only marginal performance benefits beyond a quantifiable saturation point identified by our geometric method.
\end{itemize}

The remainder of this paper is organized as follows. Section \ref{sec:model} introduces the system model, including the channel and spatial correlation structures. In Section \ref{sec:analysis}, we derive a closed-form asymptotic SER expression. Then Section \ref{sec:valid} provides the analysis of the degree of freedom (DoF) in FAS, where we first introduce a geometry-based algorithm and then provide a theoretical analysis to validate the approach. In Section \ref{sec:result}, numerical results are presented to validate our analytical framework. Finally, Section \ref{sec:conclude} concludes this paper.

\textit{Notations}: Throughout this paper, scalars are denoted by italic letters (e.g., $N$), while vectors and matrices are represented by boldface lowercase (e.g., $\mathbf{h}$) and uppercase (e.g., $\mathbf{J}$) letters, respectively. The operators $(\cdot)^T$ and $(\cdot)^H$ denote the transpose and Hermitian transpose, respectively. $\mathrm{E}[\cdot]$ represents statistical expectation. $|\cdot|$ indicates the absolute value of a scalar or the magnitude of a complex number, and $\|\cdot\|$ denotes the Euclidean norm of a vector. Also, $\mathrm{diag}(\cdot)$ creates a diagonal matrix from a vector. $\det(\cdot)$ and $\mathrm{Rank}\{\cdot\}$ denote the determinant and rank of a matrix, respectively. $(\cdot)!!$ is the double factorial. Additionally, $J_0(\cdot)$ is the zeroth-order Bessel function of the first kind, $Q(\cdot)$ is the Gaussian Q-function, and $\Gamma(\cdot)$ is the Gamma function. The notation $\mathcal{CN}(\mu, \sigma^2)$ denotes a circularly symmetric complex Gaussian distribution with mean $\mu$ and variance $\sigma^2$. $\mathbf{I}_N$ is an $N \times N$ identity matrix. The variables, $N_{\rm eff1}, N_{\rm eff2}$, $N_{\rm eff3}$, respectively, represent the first, second, and third effective rank number thresholds obtained based on the geometric method.
$N_{\rm eff}^{\rm theo}$ denotes the threshold of the effective rank number obtained based on the analytical method.
$N_{\rm eff}$ can represent the effective rank number obtained based on either geometric or analytical methods.

\begin{figure}[t]
\centering
\includegraphics[width=.8\columnwidth]{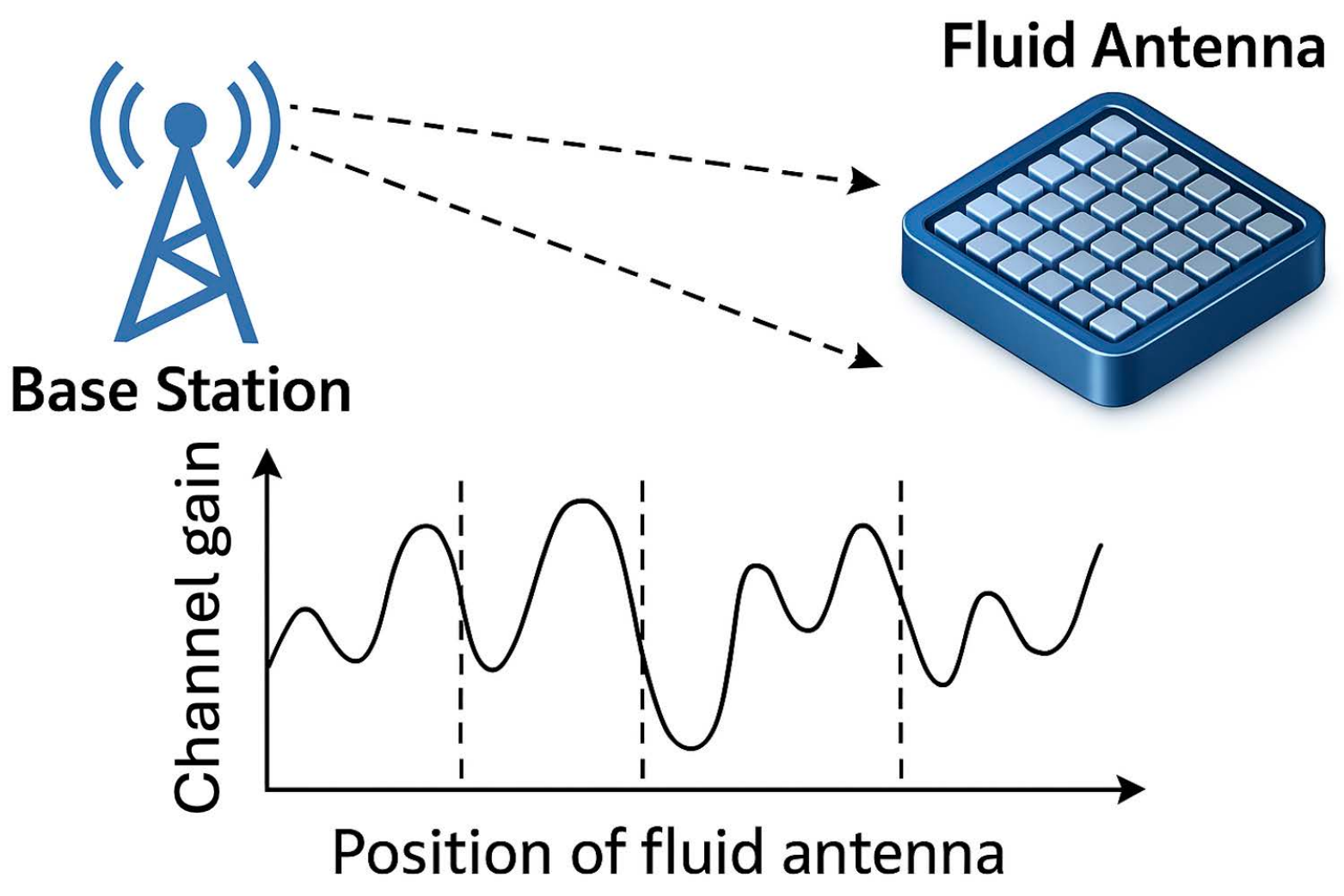}\\
\caption{Illustration of a single-user FAS model.}\label{framew}
\vspace{-2mm}
\end{figure}

\vspace{-2mm}
\section{System Model}\label{sec:model}
As shown in Fig.~\ref{framew}, we consider a point-to-point communication problem where a base station (BS) with a fixed-position antenna communicates to an FAS receiver wherein a single antenna element can be positioned at any of $N$ uniformly spaced locations, hereafter referred to as ports. These ports are distributed along a linear aperture of length $W\lambda$, where $\lambda$ is the carrier wavelength. The spatial separation between the $i$-th and $j$-th ports is therefore given by\footnote{This spatial formulation is fundamental to the subsequent analysis of spatially correlated fading, where the correlation structure is critically dependent on the inter-port distances.}
\begin{equation}\label{Deltadij}
\Delta d_{i,j} = \frac{|i - j|}{N - 1}W\lambda, \quad \text{for } i,j \in \{1, 2, \ldots, N\},
\end{equation}
where $W$ represents the normalized aperture width.
\subsection{Channel and Signal Model}
The signal received at the $n$-th port is modeled as
\begin{equation}\label{eq2}
y_n = h_n s + w_n, \quad n \in \{1, 2, \ldots, N\},
\end{equation}
where $s$ denotes the transmitted symbol, drawn from an $M$-ary constellation, e.g., phase-shift keying (PSK), quadrature amplitude modulation (QAM), or pulse amplitude modulation (PAM). The channel coefficient, $h_n = g_n e^{j\phi_n}$, is flat fading, where the amplitude $g_n$ is Rayleigh-distributed and the phase $\phi_n$ is uniformly distributed over $[-\pi, \pi)$. The term, $w_n \sim \mathcal{CN}(0, 1)$, denotes additive white Gaussian noise (AWGN).

The FAS always selects the port exhibiting the strongest channel magnitude for communication, i.e.,
\begin{equation}
g_{\rm FAS} = \max_{n \in \{1,\ldots,N\}} g_n.
\end{equation}
The instantaneous SNR at the selected port is thus given by
\begin{equation}
\gamma_{\rm FAS} = \bar{\gamma} |g_{\rm FAS}|^2,
\end{equation}
where $\bar{\gamma}$ denotes the average SNR.

\begin{table*}[t]
\centering
\caption{Parameters $p$ and $k$ for Coherent Modulation Schemes}
\label{tab:modulation_table}
\begin{tabular}{c|c|c|c}
\thickhline
\textbf{Modulation Type} & \textbf{Conditional SER} & \textbf{$p$} & \textbf{$k$} \\
\hline\hline
BPSK & $Q\left(\sqrt{2\gamma}\right)$ & $1$ & $2$ \\
\hline
$M$-PSK, $M \geq 4$ & $\approx 2Q\left(\sqrt{2\gamma} \sin\frac{\pi}{M}\right)$ & $2$ & $2\sin^2\left( \frac{\pi}{M} \right)$ \\
\hline
$M$-PAM & $2\left(1 - \frac{1}{M}\right) Q\left(\sqrt{ \frac{6\gamma}{M^2 - 1} }\right)$ & $2\left(1 - \frac{1}{M}\right)$ & $\dfrac{6}{M^2 - 1}$ \\
\hline
$M$-QAM & $4\left(1 - \frac{1}{\sqrt{M}}\right) Q\left(\sqrt{ \frac{3\gamma}{M - 1} }\right)$ & $4\left(1 - \frac{1}{\sqrt{M}}\right)$ & $\dfrac{3}{M - 1}$ \\
\thickhline
\end{tabular}
\end{table*}

\subsection{Channel Correlation Model}
The performance of FAS is fundamentally governed by the spatial correlation among the channel coefficients at different ports. To analyze this, we define the channel vector as $\mathbf{h} \triangleq [h_1, h_2, \ldots, h_N]^T$. The statistical properties of this vector are captured by its covariance matrix, $\mathbf{R} = \mathrm{E}[\mathbf{h}\mathbf{h}^H]$. This matrix can be decomposed to separate the effects of average received power and spatial correlation as
\begin{equation}\label{eq:corr_matrix}
    \mathbf{R} = \mathbf{T} \mathbf{J} \mathbf{T},
\end{equation}
where $\mathbf{T} = \mathrm{diag}(\sqrt{\bar{\gamma}_1}, \sqrt{\bar{\gamma}_2}, \ldots, \sqrt{\bar{\gamma}_N})$ is a diagonal matrix of the root-mean-square signal levels at each port, and $\mathbf{J}$ is the normalized spatial correlation matrix. The $(i,j)$-th element of $\mathbf{J}$ is the correlation coefficient given by
\begin{equation}\label{eq5j}
    J_{i,j} = \frac{\mathrm{E}[h_i h_j^*]}{\sqrt{\mathrm{E}[|h_i|^2] \mathrm{E}[|h_j|^2]}}.
\end{equation}
For this work, we adopt the widely used Jake's model to characterize the spatial correlation, which is particularly suitable for one-dimensional antenna arrays assuming rich scattering environments. The entries of $\mathbf{J}$ are thus given by
\begin{equation}
    J_{i,j}=J_0\left(\frac{2\pi\Delta d_{i,j}}{\lambda}\right)=J_0\left(2\pi W \frac{|i-j|}{N-1}\right).
\end{equation}
To represent the correlated channel, we perform an eigenvalue decomposition of $\mathbf{J}$, yielding
\begin{equation}\label{eq:eig_decomp}
    \mathbf{J} = \mathbf{U} \boldsymbol{\Lambda} \mathbf{U}^H,
\end{equation}
where $\mathbf{U}$ is an $N \times N$ unitary matrix whose columns are the eigenvectors of $\mathbf{J}$, and $\boldsymbol{\Lambda} = \mathrm{diag}(\lambda_1, \lambda_2, \ldots, \lambda_N)$ is a diagonal matrix containing the corresponding real, non-negative eigenvalues, sorted in descending order. This decomposition allows the channel vector $\mathbf{h}$ to be expressed as a linear transformation of a vector of independent and identically distributed (i.i.d.) complex Gaussian random variables
\begin{equation}
    \mathbf{h} = \mathbf{U}\boldsymbol{\Lambda}^{\frac{1}{2}} \mathbf{z},
\end{equation}
where $\mathbf{z} \sim \mathcal{CN}(\mathbf{0}, \mathbf{I}_N)$. This representation is instrumental for the performance analysis in the subsequent sections.

\vspace{-2mm}
\section{Performance Analysis}\label{sec:analysis}
Here, we derive a closed-form asymptotic expression for the average SER of the FAS. By leveraging high-SNR asymptotic techniques, we obtain a tractable SER expression that provides fundamental insights into the system's error performance.

\subsection{Conditional Error Probability}
Consider a coherent modulation with conditional SER as
\begin{equation}\label{cpep}
P_e(x)=pQ(\sqrt{kx\bar{\gamma}}),
\end{equation}
where $p$ and $k$ are constants related to the modulation format. The modulation parameters are given in Table \ref{tab:modulation_table}.

Because \eqref{cpep} contains the $Q$-function in its integral form, a closed-form expression is not attainable. This limitation obscures analytical insight and prevents a precise assessment of how each system parameter influences performance.

\subsection{Unconditional Error Probability}
The average SER is formally defined as the expectation of the instantaneous SER over all possible channel realizations characterized by their probability density function (PDF). Mathematically, the instantaneous SER can be found by
\begin{equation}\label{PExpn}
\begin{aligned}
P_{E} = p \int_{0}^{\infty} f(x)Q\left(\sqrt{{kx\overline{\gamma}}}\right) dx = p\bar{P}_e,
\end{aligned}
\end{equation}
where $\bar{P}_e = \int_{0}^{\infty} f(x)Q\left(\sqrt{kx\overline{\gamma}}\right) dx$ denotes the average SER contribution from the PDF of the random variable $x$.

To proceed, we employ Lemma~1 to obtain the high-SNR approximation of $f(x)$.

\begin{lemma}\label{lemma1}
In the high-SNR regime, the PDF of $x=\gamma_{\rm FAS}$ can be approximated as
\begin{equation}
f(x) \approx \frac{N x^{N-1}}{\det(\mathbf{J}) \prod_{n=1}^N \bar{\gamma}_n}, \quad \text{as } x \to 0^+.
\end{equation}
\end{lemma}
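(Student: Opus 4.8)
The plan is to turn the port-selection rule into an elementary geometric event for the channel vector and then read off the leading small-argument behaviour of the density from a volume computation. Since the FAS always picks the strongest port, the event underlying the CDF at level $x$ --- that the selected port has squared magnitude at most $x$ --- is exactly the event that $|h_n|^2\le x$ holds for every $n$; equivalently, $\mathbf{h}$ must lie in the complex polydisc $\mathcal{B}(x)\triangleq\{\mathbf{h}\in\mathbb{C}^N:|h_n|\le\sqrt{x},\ n=1,\dots,N\}$. Hence $F(x)=\Pr\{\mathbf{h}\in\mathcal{B}(x)\}=\int_{\mathcal{B}(x)}f_{\mathbf{h}}(\mathbf{h})\,\mathrm{d}\mathbf{h}$, where, because $\mathbf{h}=\mathbf{U}\boldsymbol{\Lambda}^{1/2}\mathbf{z}\sim\mathcal{CN}(\mathbf{0},\mathbf{R})$ with $\mathbf{R}=\mathbf{T}\mathbf{J}\mathbf{T}$, the joint density $f_{\mathbf{h}}(\mathbf{h})=\pi^{-N}\det(\mathbf{R})^{-1}\exp(-\mathbf{h}^H\mathbf{R}^{-1}\mathbf{h})$ is continuous and strictly positive at the origin.

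The core of the argument is a ``freeze the density at the origin'' step. As $x\to0^+$ the region $\mathcal{B}(x)$ contracts to $\mathbf{0}$, so I would write $F(x)=f_{\mathbf{h}}(\mathbf{0})\,\mathrm{vol}\{\mathcal{B}(x)\}\,(1+o(1))$, the relative error being controlled by the uniform continuity of $f_{\mathbf{h}}$ on a compact neighbourhood of the origin, i.e.\ $\sup_{\mathbf{h}\in\mathcal{B}(x)}|f_{\mathbf{h}}(\mathbf{h})-f_{\mathbf{h}}(\mathbf{0})|\to0$. Identifying $\mathbb{C}$ with $\mathbb{R}^2$, each factor disc $\{|h_n|\le\sqrt{x}\}$ has area $\pi x$, so $\mathrm{vol}\{\mathcal{B}(x)\}=(\pi x)^N$, while $f_{\mathbf{h}}(\mathbf{0})=\pi^{-N}\det(\mathbf{R})^{-1}$; together these give $F(x)=x^N/\det(\mathbf{R})+o(x^N)$. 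To pass from the CDF to the PDF without differentiating an asymptotic relation, I would apply the same contraction argument to the thin shell $\mathcal{B}(x+\mathrm{d}x)\setminus\mathcal{B}(x)$, whose volume equals $N\pi^N x^{N-1}\,\mathrm{d}x+o(\mathrm{d}x)$, obtaining $f(x)=f_{\mathbf{h}}(\mathbf{0})\,N\pi^N x^{N-1}(1+o(1))=N x^{N-1}/\det(\mathbf{R})$ as $x\to0^+$.

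What then remains is purely algebraic: from $\mathbf{R}=\mathbf{T}\mathbf{J}\mathbf{T}$ with $\mathbf{T}=\mathrm{diag}(\sqrt{\bar{\gamma}_1},\dots,\sqrt{\bar{\gamma}_N})$ one has $\det(\mathbf{R})=\det(\mathbf{T})^2\det(\mathbf{J})=\bigl(\prod_{n=1}^{N}\bar{\gamma}_n\bigr)\det(\mathbf{J})$, giving the claimed $f(x)\approx N x^{N-1}/\bigl(\det(\mathbf{J})\prod_{n=1}^{N}\bar{\gamma}_n\bigr)$. As a consistency check, independent ports would yield $F(x)\approx\prod_{n}(x/\bar{\gamma}_n)=x^N/\prod_n\bar{\gamma}_n$, so the spatial correlation contributes precisely the extra factor $1/\det(\mathbf{J})$. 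I expect the only genuinely delicate point to be justifying that the ``freeze at the origin'' replacement incurs only a lower-order error --- and that the resulting estimate transfers to the density and not merely the CDF --- which rests on continuity of the Gaussian density together with $\det(\mathbf{J})>0$, i.e.\ $\mathbf{J}$ being full rank. Were $\mathbf{J}$ rank-deficient, the leading power would cease to be $N$ (precisely the effective-rank phenomenon studied in Section~\ref{sec:valid}); throughout Lemma~\ref{lemma1} one assumes $\det(\mathbf{J})>0$.
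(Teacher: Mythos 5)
Your proposal is correct, and it reaches the same leading-order result by the same underlying principle as Appendix~\ref{sec:appA} --- namely that for small $x$ the Gaussian density of $\mathbf{h}$ may be replaced by its value at the origin, $\pi^{-N}\det(\mathbf{R})^{-1}$ --- but the execution is genuinely different and noticeably leaner. The paper works in polar coordinates: it forms the joint PDF of the gains and phases, writes the CDF of $g_{\rm FAS}=\max_n g_n$ as an integral over the hypercube $[0,g_{\rm FAS}]^N$, differentiates (pulling out a factor $N$ that tacitly treats the ports symmetrically), and then evaluates $N-1$ nested gain integrals and $N$ phase integrals with the approximations $H_n\approx 1$, finishing with the change of variables $\gamma_{\rm FAS}=g_{\rm FAS}^2$. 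You instead identify $\{\gamma_{\rm FAS}\le x\}$ with the polydisc event $\{|h_n|\le\sqrt{x}\ \forall n\}$ and compute $F(x)\approx f_{\mathbf{h}}(\mathbf{0})\,(\pi x)^N=x^N/\det(\mathbf{R})$ in one stroke, passing to the density via the thin-shell volume $N\pi^N x^{N-1}\,\mathrm{d}x$ rather than by differentiating inside nested asymptotic integrals; the factorization $\det(\mathbf{R})=\det(\mathbf{J})\prod_{n}\bar{\gamma}_n$ then matches the paper's final step exactly. What your route buys is transparency and a cleaner justification of the error control (uniform continuity of the density on the shrinking polydisc), and it avoids the symmetry-based factor-$N$ differentiation step, which in the paper is only valid to leading order since the correlated ports are not exchangeable; what the paper's route buys is an explicit handle on the phase-dependent cross terms $H_n$, which makes visible exactly which interactions are discarded as $g_{\rm FAS}\to0$. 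Your explicit remark that the argument requires $\det(\mathbf{J})>0$ (full-rank $\mathbf{J}$), and that rank deficiency changes the leading exponent, is a point the paper leaves implicit until the effective-rank discussion, and is worth stating.
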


\begin{proof}
See Appendix~\ref{sec:appA}.
\end{proof}

Substituting the result of Lemma~\ref{lemma1} into the average SER expression yields the asymptotic form
\begin{equation}
\bar{P}_{e} \approx \int_{0}^{\infty} \frac{N x^{N-1}}{\det(\mathbf{J}) \prod_{n=1}^N \bar{\gamma}_n} Q\left(\sqrt{kx\overline{\gamma}}\right) dx.
\end{equation}
Using the integral representation of the $Q$-function \cite{zhu2022on,zhu2025toward,zhu2024on},
\begin{equation}
Q(z) = \frac{1}{\sqrt{2\pi}} \int_z^\infty e^{-v^2/2} dv,
\end{equation}
we can rewrite the expression as
\begin{equation}
\bar{P}_e \approx \frac{N}{\det(\mathbf{J}) \prod_{n=1}^N \bar{\gamma}_n \sqrt{2\pi}} \int_{0}^{\infty} x^{N-1} \int_{\sqrt{kx\overline{\gamma}}}^{\infty} e^{-v^2/2} dv  dx.
\end{equation}

By interchanging the order of integration, where the domain is defined by $0 < x < \infty$ and $\sqrt{kx\overline{\gamma}} < v < \infty$, the integration limits are transformed to $0 < v < \infty$ and $0 < x < v^2/(k\overline{\gamma})$. As a result, the expression for $P_e$ becomes
\begin{equation}\label{barpeev}
\begin{aligned}
\bar P_{e} &\approx \frac{N}{\det(\mathbf{J}) \prod_{n=1}^N \bar{\gamma}_n\sqrt{2\pi}} \int_{0}^{\infty} e^{-v^2/2} \left( \int_{0}^{\frac{v^2}{k\overline{\gamma}}} x^{N-1} dx \right) dv \\
&= \frac{N}{\det(\mathbf{J}) \prod_{n=1}^N \bar{\gamma}_n\sqrt{2\pi}} \int_{0}^{\infty} e^{-v^2/2} \left[ \frac{x^{N}}{N} \right]_0^{\frac{v^2}{k\overline{\gamma}}} dv \\
&= \frac{N}{\det(\mathbf{J}) \prod_{n=1}^N \bar{\gamma}_n\sqrt{2\pi}} \int_{0}^{\infty} e^{-v^2/2} \left( \frac{1}{N} \left( \frac{v^2}{k\overline{\gamma}} \right)^{N} \right) dv \\
&= \frac{1}{\det(\mathbf{J}) \prod_{n=1}^N \bar{\gamma}_n\sqrt{2\pi}(k\overline{\gamma})^{N}} \int_{0}^{\infty} v^{2N} e^{-v^2/2} dv.
\end{aligned}
\end{equation}
The remaining integral can be solved by relating it to the Gamma function
\begin{equation}\label{gammazd}
\Gamma(z) = \int_0^\infty y^{z-1}e^{-y}dy.
\end{equation}
Using the substitution $y = v^2/2$, which implies $v=\sqrt{2y}$ and $dv = (1/\sqrt{2y})dy$, we have
\begin{equation}
\begin{aligned}
\int_{0}^{\infty} v^{2N} e^{-v^2/2} dv & \hspace{.5mm}= \int_{0}^{\infty} (\sqrt{2y})^{2N} e^{-y} \frac{1}{\sqrt{2y}} dy \\
&\hspace{.5mm}= \int_{0}^{\infty} 2^N y^N e^{-y} 2^{-1/2} y^{-1/2} dy \\
&\hspace{.5mm}= 2^{N-1/2} \int_{0}^{\infty} y^{N-1/2} e^{-y} dy \\
&\overset{(a)}{=} 2^{N-1/2} \Gamma\left(N+\frac{1}{2}\right),
\end{aligned}
\end{equation}
where step $(a)$ results from applying (\ref{gammazd}). Accordingly, (\ref{barpeev}) can be reformulated as
\begin{equation}\label{barpeev1}
\begin{aligned}
\bar P_{e}
&\approx \frac{1}{\det(\mathbf{J}) \prod_{n=1}^N \bar{\gamma}_n\sqrt{2\pi}(k\overline{\gamma})^{N}} \left( 2^{N-1/2} \Gamma\left(N+\frac{1}{2}\right) \right).
\end{aligned}
\end{equation}
By substituting this result into (\ref{PExpn}), we obtain the final asymptotic expression as
\begin{equation}\label{Pepin}
\begin{aligned}
P_E &\approx
\frac{p2^{N-1} \Gamma(N+1/2)}{\det(\mathbf{J}) \prod_{n=1}^N \bar{\gamma}_n\sqrt{\pi}} (k\overline{\gamma})^{-N}.
\end{aligned}
\end{equation}
To obtain a closed-form expression that does not involve special functions, we further simplify the Gamma function term in (\ref{Pepin}). For half-integer arguments, the Gamma function admits the identity \cite{zhu2023ris}
\begin{equation}\label{gammahalf}
\Gamma\left(N + \frac{1}{2}\right) = \frac{(2N - 1)!!}{2^N} \sqrt{\pi},
\end{equation}
where \((2N - 1)!!\) denotes the double factorial of odd integers:
\begin{equation}
(2N - 1)!! = (2N - 1)(2N - 3)\cdots(3)(1).
\end{equation}
Substituting (\ref{gammahalf}) into (\ref{Pepin}), we obtain the fully simplified asymptotic SER as
\begin{equation}\label{PEfinal}
P_E \approx \frac{p  2^{N - 1} }{\det(\mathbf{J}) \prod_{n=1}^N \bar{\gamma}_n \sqrt{\pi}} \cdot \frac{(2N - 1)!!}{2^N} \sqrt{\pi}  (k\overline{\gamma})^{-N}.
\end{equation}
Simplifying further, we arrive at the closed-form result as
\begin{equation}\label{PEclosed}
P_E \approx \frac{p  (2N - 1)!!}{2 \det(\mathbf{J}) \prod_{n=1}^N \bar{\gamma}_n} (k\overline{\gamma})^{-N}.
\end{equation}

\begin{remark}
The asymptotic SER in (\ref{PEclosed}) provides an important analytical tool for rapidly and accurately estimating the error performance of FAS in the high-SNR regime.

{\bf Case 1:} The determinant of the correlation matrix, $\det(\mathbf{J})$, provides significant insight into the structure of the spatial channel and its impact on system performance. From matrix theory, the determinant is the product of its eigenvalues, $\prod_{n=1}^N \lambda_n$. For a normalized correlation matrix, the diagonal elements are unity, leading to the Hadamard inequality: $\det(\mathbf{J}) \leq \prod_{n=1}^N J_{n,n} = 1$. Equality holds if and only if $\mathbf{J}$ is a diagonal matrix, specifically $\mathbf{J} = \mathbf{I}_N$, which represents the ideal scenario of completely uncorrelated ports where spatial diversity is maximized.

{\bf Case 2:} Conversely, the other extreme, $\det(\mathbf{J}) = 0$, signifies a rank-deficient matrix. This condition implies linear dependence among the channel vectors at different port locations, which occurs when ports are so closely spaced within a limited aperture that their channel responses become highly correlated. In such a scenario, the system is unable to resolve distinct spatial paths, and the diversity gain is nullified.
\end{remark}

In a practical FAS, the ports are confined to a physical aperture of length $W\lambda$, which renders some degree of spatial correlation inevitable. This physical constraint implies that the system's performance is not governed by the nominal number of ports, $N$, but rather by the number of valid ports that contribute distinct spatial dimensions. Quantifying these effective DoFs is therefore crucial for an accurate performance evaluation. To this end, the subsequent section is dedicated to a detailed analysis of the number of valid ports in an FAS.

\begin{algorithm}[t]
\caption{Proposed geometry-based efficient rank identification algorithm}\label{alg:threshold_knee_descriptive}
{\small
\begin{algorithmic}[1]
    \State \textbf{Input:} Normalized aperture width $W$; Number of ports $N$.
    \State \textbf{Output:} The indices of the key turning points: ${\it N}_{\rm eff1}, {\it N}_{\rm eff2}, {\it N}_{\rm eff3}$.
    \State Construct the $N \times N$ correlation matrix $\mathbf{J}$.
    \State Compute and sort eigenvalues of $\mathbf{J}$ to obtain the spectrum $\boldsymbol{\lambda}$.
    \State Partition the spectrum $\boldsymbol{\lambda}$ into three index sets based on thresholds $\tau_1 = 10^{-10}$ and $\tau_2 = 10^{-15}$:
    \State \quad $\mathcal{R}_1 \gets \{i \mid \lambda_i > \tau_1\}$;
    \State \quad $\mathcal{R}_2 \gets \{i \mid \tau_2 < \lambda_i \le \tau_1\}$;
    \State \quad $\mathcal{R}_3 \gets \{i \mid \lambda_i \le \tau_2\}$.
    \For{each region $k \in \{1, 2, 3\}$}
        \State Let $\boldsymbol{\lambda}^{(k)}$ be the sub-spectrum corresponding to indices in $\mathcal{R}_k$.
        \State Apply geometric curvature analysis to the log-scale plot of $\boldsymbol{\lambda}^{(k)}$ to find the local index $i^*_k$ of the point with the sharpest angle.
        \State Map the local index $i^*_k$ back to its global index in $\boldsymbol{\lambda}$ to determine ${\it N}_{{\rm eff}k}$.
    \EndFor
    \State \textbf{return} ${\it N}_{\rm eff1}, {\it N}_{\rm eff2}, {\it N}_{\rm eff3}$.
\end{algorithmic}
}
\end{algorithm}

\vspace{-2mm}
\section{Valid Ports Analysis in FAS}\label{sec:valid}
In this section, we quantify the number of valid ports in an FAS. This metric is crucial as it dictates the available spatial DoF and is mathematically captured by the effective rank of the spatial correlation matrix. The effective rank serves as the foundation for our high-SNR analysis, from which we derive the system's diversity gain and coding gain.

\begin{figure*}[t]
\centering
    \subfloat[$W=0.5$]{\includegraphics[width=.24\linewidth, height=4cm]{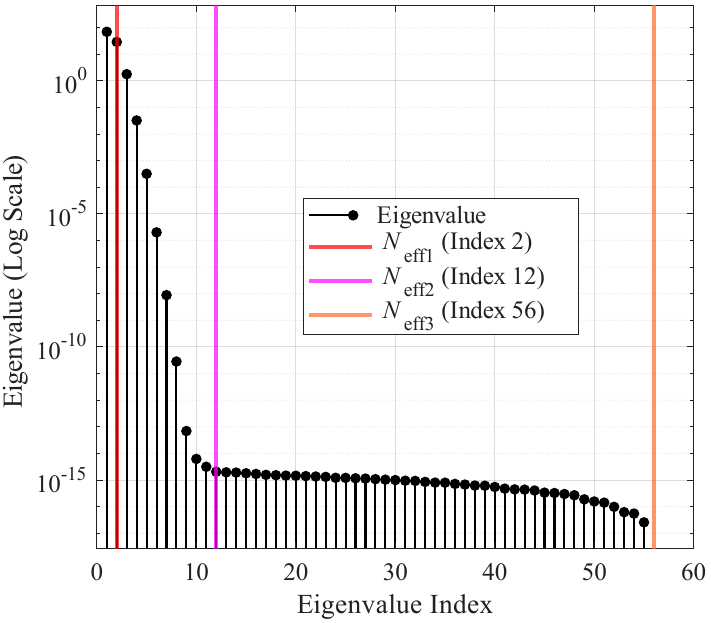}}
    \subfloat[$W=1$]{\includegraphics[width=.24\linewidth, height=4cm]{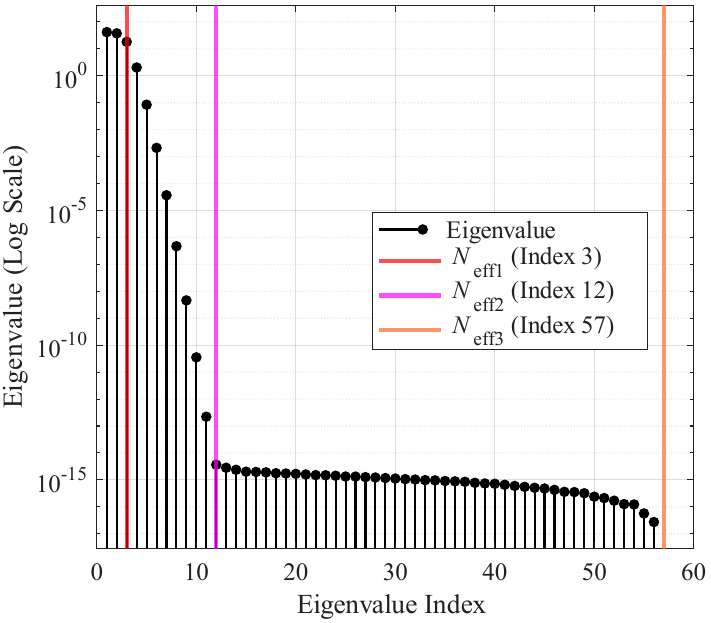}}
    \subfloat[$W=2$]{\includegraphics[width=.24\linewidth, height=4cm]{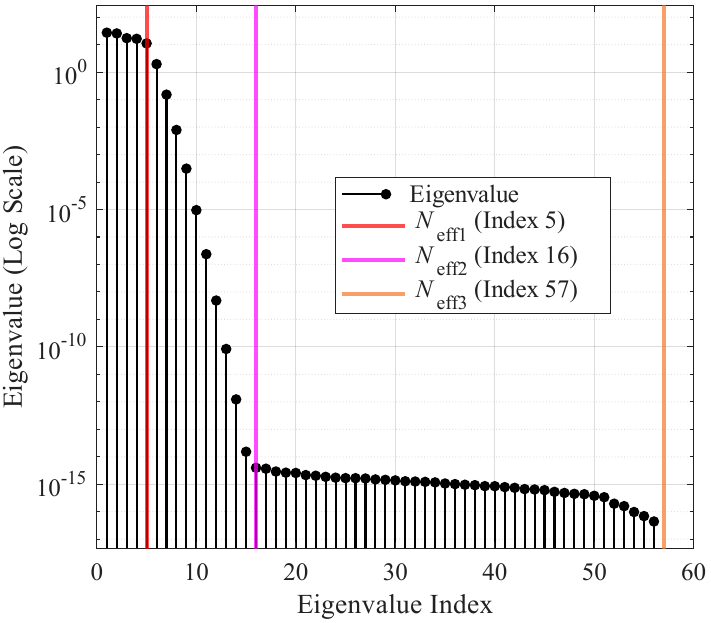}}
   \subfloat[$W=4$]{ \includegraphics[width=.24\linewidth, height=4cm]{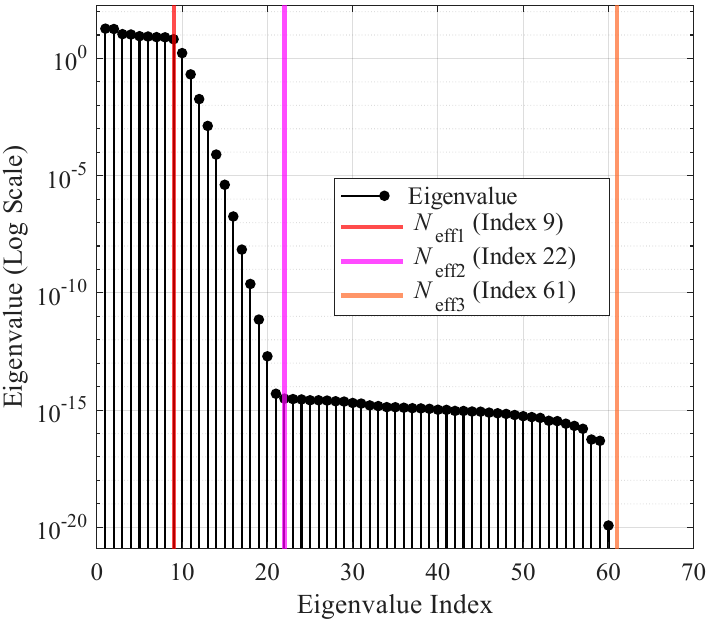}}
\caption{Eigenvalue spectrum and the three key turning points (${\it N}_{\rm eff1}, {\it N}_{\rm eff2}, {\it N}_{\rm eff3}$) identified by the proposed geometric algorithm for different aperture widths $W$ ($N=100$).}\label{fig:eigen_spectrum}
\end{figure*}

\begin{figure*}[t]
\centering
    \subfloat[$W=0.5$]{\includegraphics[width=.24\linewidth, height=4cm]{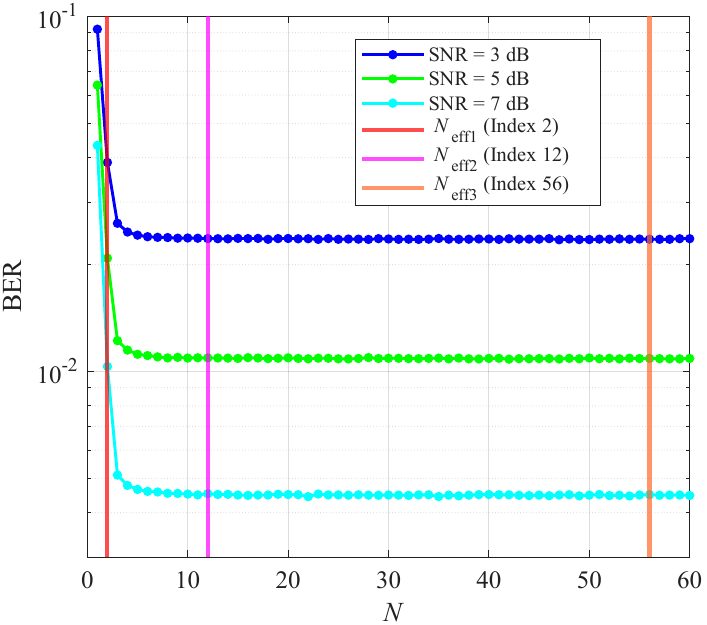}}
    \subfloat[$W=1$]{\includegraphics[width=.24\linewidth, height=4cm]{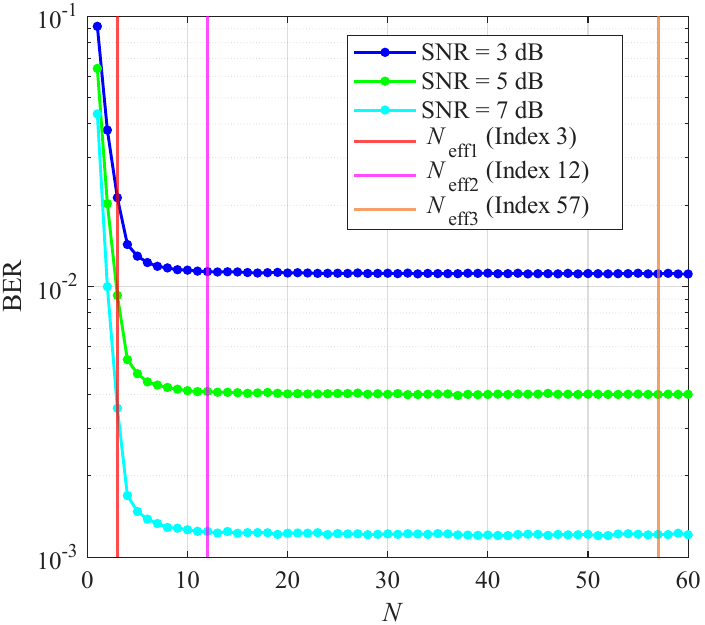}}
    \subfloat[$W=2$]{\includegraphics[width=.24\linewidth, height=4cm]{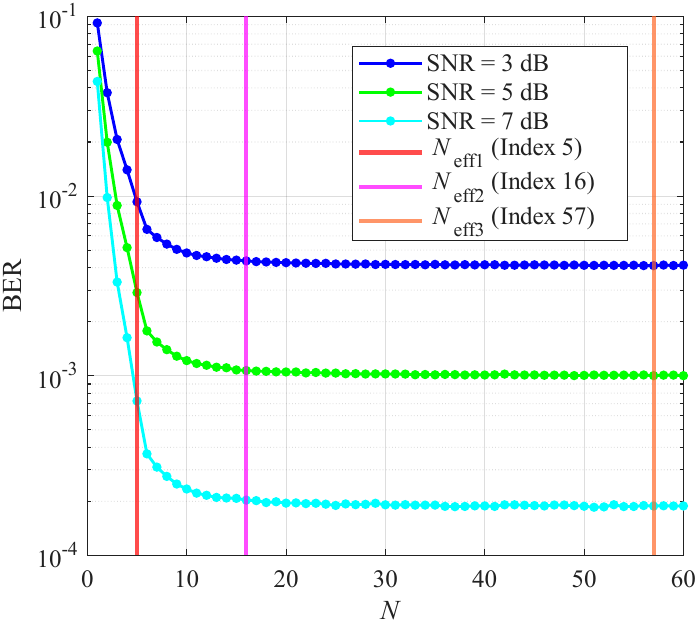}}
    \subfloat[$W=4$]{\includegraphics[width=.24\linewidth, height=4cm]{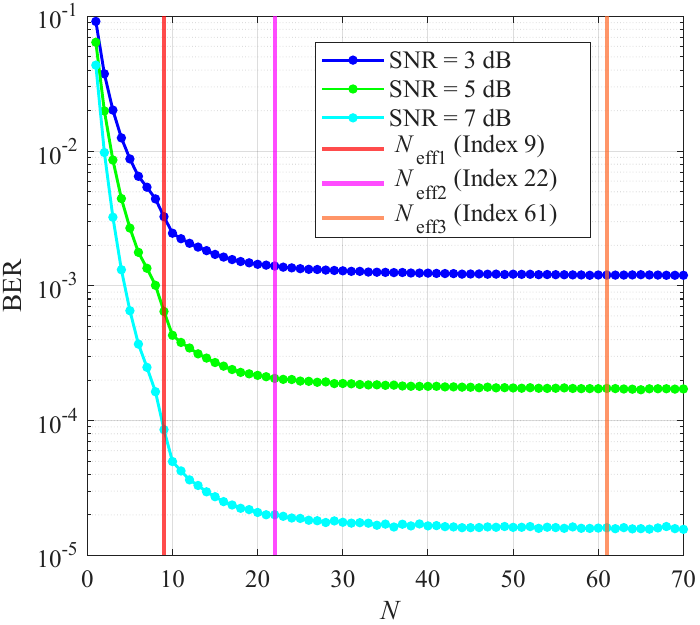}}
\caption{BER performance as a function of the number of effective rank, for different aperture widths $W$ and SNRs ($N=100$). The vertical lines correspond to the turning points identified in Fig.~\ref{fig:eigen_spectrum}.}\label{fig:ber_performance}
\end{figure*}

\subsection{Proposed Geometry-Based Method }
To provide a more granular analysis of the channel structure, we introduce a refined geometric method designed to identify not just a single effective rank, but three distinct turning points (${\it N}_{\rm eff1}, {\it N}_{\rm eff2}, {\it N}_{\rm eff3}$) that characterize the transitions between the signal-dominant, noise-dominant, and numerical-precision-limited regions of the eigenvalue spectrum.

The procedure commences with the eigenvalue decomposition of the correlation matrix $\mathbf{J}$, yielding a sorted, non-negative eigenvalue vector $\boldsymbol{\lambda}$. The core of this method is to divide the signal into three parts based on the amplitude characteristics. This division defines three different analysis areas:
\begin{enumerate}
\item The signal-dominant subspace, containing $\lambda_i > \tau_1$.
\item The transition subspace, with eigenvalues $\tau_2 < \lambda_i \le \tau_1$.
\item The noise and numerical floor, with eigenvalues $\lambda_i \le \tau_2$.
\end{enumerate}

Within each of these three regions, the geometric curvature analysis is applied independently. For each segment, a set of two-dimensional (2D) coordinates is formed from the indices and the base-10 logarithms of the corresponding eigenvalues. These coordinates are then normalized to the range $[0, 1]$ to ensure scale invariance.
The point of maximum curvature, or the knee, within each segment is identified by quantifying the angle at each interior point. For a point $\mathbf{p}_i$ in a given segment, two vectors are defined relative to its neighbors:
\begin{align}
& \mathbf{v}_1 = \mathbf{p}_{i-1} - \mathbf{p}_i,   \\
&\mathbf{v}_2 = \mathbf{p}_{i+1} - \mathbf{p}_i.
\end{align}
The cosine of the angle between these vectors is calculated using the dot product formula:
\begin{equation}
c_i = \frac{\mathbf{v}_1^T \mathbf{v}_2}{\|\mathbf{v}_1\| \|\mathbf{v}_2\|}.
\end{equation}
The local index $k_{\text{local}}$ that maximizes this cosine value corresponds to the sharpest turn within that segment. This index is then mapped back to its global index in the full spectrum $\boldsymbol{\lambda}$ to identify the turning point for that region. By performing this procedure for all three regions, we obtain the key points: ${\it N}_{\rm eff1}, {\it N}_{\rm eff2}, \text{and } {\it N}_{\rm eff3}$. This threshold-based approach provides a robust and deterministic method for characterizing the multi-stage decay profile of the channel's eigenvalue spectrum.

To validate the proposed geometric method and illustrate its practical utility, we present a joint analysis of the eigenvalue spectrum and the corresponding BER performance in Figs.~\ref{fig:eigen_spectrum} and \ref{fig:ber_performance}, respectively. Fig.~\ref{fig:eigen_spectrum} depicts the sorted eigenvalue spectrum of the correlation matrix $\mathbf{J}$ for various normalized aperture widths $W$. The vertical lines denote the three key turning points, ${\it N}_{\rm eff1}, {\it N}_{\rm eff2}, \text{and } {\it N}_{\rm eff3}$, identified by our algorithm. The first turning point, ${\it N}_{\rm eff1}$, which corresponds to the primary effective rank, scales linearly with $W$, confirming the algorithm's capability to capture the increase in channel dimensionality with aperture size. However, a more profound insight is gained by contextualizing these turning points with the empirical BER in Fig.~\ref{fig:ber_performance}. A critical observation emerges: while the BER curves exhibit diminishing returns after the first turning point (${\it N}_{\rm eff1}$), they achieve full performance saturation, indicated by a completely horizontal trajectory, precisely at the second turning point, ${\it N}_{\rm eff2}$. This direct correspondence between the spectral geometry and system performance allows us to define ${\it N}_{\rm eff2}$ as a pragmatic metric: the saturation rank, which quantitatively identifies the minimum number of ports required to exhaust all available diversity gains. The ability of our multi-stage geometric analysis to identify not only the primary effective rank (${\it N}_{\rm eff1}$) but also this practical saturation point (${\it N}_{\rm eff2}$) underscores its significant advantage over conventional single-rank estimation methods.

\subsection{Proposed Theoretical Analysis Method}

\begin{theorem}\label{thm:rank}
As $N \rightarrow \infty$, the effective rank $N_{\mathrm{eff}}^{\rm theo}$ of ${
\bf J}$ can be given by
\begin{equation}
N_{\mathrm{eff}}^{\rm theo}\triangleq 2W + 1,
\label{eq:reff_final}
\end{equation}
where the effective rank increases linearly with the normalized spatial bandwidth $W$ and is asymptotically independent of the sampling density.
\end{theorem}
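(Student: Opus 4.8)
The plan is to pass to the continuum limit and then invoke the eigenvalue-concentration theory for jointly aperture- and band-limited operators. Writing the $n$-th port position in wavelengths as $p_n=\frac{(n-1)W}{N-1}\in[0,W]$, the matrix $\mathbf{J}$ has entries $J_0\!\left(2\pi|p_n-p_m|\right)$, so for a smooth test vector $v_m=\phi(p_m)$ the Riemann-sum identity $\sum_{m}J_0(2\pi|p_n-p_m|)\phi(p_m)\to\frac{N-1}{W}\int_0^{W}J_0(2\pi|p_n-q|)\phi(q)\,dq$ holds as $N\to\infty$. Hence $\lambda_k(\mathbf{J})\sim\frac{N-1}{W}\,\mu_k$, where $\mu_1\ge\mu_2\ge\cdots>0$ are the eigenvalues of the trace-class integral operator $\mathcal{K}$ on $L^2([0,W])$ with kernel $\mathcal{K}(p,q)=J_0(2\pi(p-q))$. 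Consequently the location of the knee in the spectrum of $\mathbf{J}$ — the effective rank — is $N$-independent in the limit and equals the number of non-negligible $\mu_k$; this already isolates the claim that the effective rank is asymptotically independent of the sampling density.

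Next I would pin down the spatial-frequency support of the kernel. From the integral representation $J_0(2\pi\tau)=\frac{1}{2\pi}\int_0^{2\pi}e^{j2\pi\tau\cos\varphi}\,d\varphi=\int_{-1}^{1}e^{j2\pi\tau u}\,\frac{du}{\pi\sqrt{1-u^{2}}}$, the kernel $\mathcal{K}(p,q)$ is the Fourier transform of a measure supported on the normalized spatial-frequency band $u\in[-1,1]$, of Lebesgue measure $2$. Thus $\mathcal{K}=P_{[0,W]}\,B\,P_{[0,W]}$, with $P_{[0,W]}$ the restriction to the aperture and $B$ the (weighted) band-limiting operator to $u\in[-1,1]$; since the Jakes weight $\frac{1}{\pi\sqrt{1-u^{2}}}$ is strictly positive a.e.\ on $(-1,1)$, the essential dimension of $\mathcal{K}$ is controlled by the band $[-1,1]$ exactly as for the classical prolate/sinc kernel.

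Then I would invoke the Slepian--Landau--Pollak eigenvalue-concentration result (in the generalized Landau--Widom form): for a jointly aperture- and band-limited operator with aperture length $W$ and band of measure $B$, the count of indices $k$ with $\mu_k>\epsilon$ equals the time--bandwidth product $WB$ to leading order, with an $O(\log(WB))$ transition term. Here $B=2$, so this count is $2W+O(\log W)$: the spectrum of $\mathcal{K}$ displays a plateau of $\approx 2W$ non-negligible eigenvalues followed by a sharp drop, which is precisely the knee located by Algorithm~\ref{alg:threshold_knee_descriptive} and observed at a position proportional to $W$ in Fig.~\ref{fig:eigen_spectrum}. The additive constant is then fixed by the standard degrees-of-freedom count for a bandlimited field: a channel $h(x)=\int_{-1}^{1}\alpha(u)e^{j2\pi x u}\,du$ observed over $x\in[0,W]$ is essentially determined by its samples at $x=0,1/2,1,\ldots,W$, i.e.\ by $2W+1$ complex numbers, so the Gram matrix $\mathbf{J}$ of the port channels has effective rank $2W+1$, giving $N_{\mathrm{eff}}^{\rm theo}\triangleq 2W+1$.

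The main obstacle is making "effective rank" precise enough that the answer is literally $2W+1$ rather than $2W+O(\log W)$: the Landau--Widom expansion has a genuine logarithmic transition region, so the clean value $2W+1$ must be understood as the leading-order plateau count together with the fencepost correction supplied by the sampling-theorem dimension, and the theorem should be read in that asymptotic sense (which is also why its statement is phrased as a definition). A secondary technical point is justifying the exchange of the $N\to\infty$ limit with the knee-counting — i.e.\ that $\lambda_k(\mathbf{J})/\lambda_1(\mathbf{J})\to\mu_k/\mu_1$ uniformly enough for the knee to converge — which follows from trace-norm (collectively compact) convergence of the scaled matrices to $\mathcal{K}$, a routine but slightly delicate quadrature argument given the integrable singularity of the Jakes spectral weight at $u=\pm1$.
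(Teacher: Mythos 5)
Your argument reaches the same conclusion as the paper but by a genuinely different route. The paper works entirely in the Toeplitz/Szeg\H{o} picture: it forms the symbol $f(\theta)=\sum_k J_k e^{-jk\theta}$ of $\mathbf{J}$, argues that for the Jakes sequence this symbol is (approximately) a rectangle, \emph{defines} the effective rank as the exponential of the spectral entropy of the normalized eigenvalue distribution, and then counts the $2W+1$ equal-power discrete harmonics $|k|\le W$ inside the rectangular band to get $H_{\rm d}=\log(2W+1)$ and hence $N_{\rm eff}^{\rm theo}=e^{H_{\rm d}}=2W+1$ exactly. You instead pass to the continuum integral operator $\mathcal{K}=P_{[0,W]}BP_{[0,W]}$ with kernel $J_0(2\pi(p-q))$, use the $[-1,1]$ spectral support of the Jakes weight, and invoke Slepian--Landau--Pollak/Landau--Widom eigenvalue concentration to get a plateau of $2W+O(\log W)$ significant eigenvalues, fixing the additive constant by the Nyquist fencepost count $2W+1$. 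Your route buys a cleaner justification of the $N$-independence (trace-class convergence of the scaled matrices to $\mathcal{K}$, which the paper only handles implicitly through the $N$-independence of the symbol) and rests on a genuine theorem about time- and band-limited operators; its cost, which you candidly flag, is that the sharp value $2W+1$ is only recovered modulo the logarithmic transition region plus a sampling-theorem heuristic. The paper's entropy-based definition sidesteps the log correction and yields $2W+1$ exactly, but at the price of a heuristic rectangular-symbol approximation and a somewhat abrupt jump from the continuous spectral entropy to a discrete entropy over $2W+1$ harmonics. Since the statement is itself phrased as an asymptotic definition of $N_{\rm eff}^{\rm theo}$, both arguments are of comparable rigor, and yours is an acceptable alternative derivation of the same fundamental limit.
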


\begin{proof}
See Appendix~\ref{sec:appB}.
\end{proof}

\begin{figure}[t]
\centering
\includegraphics[width=.9\columnwidth]{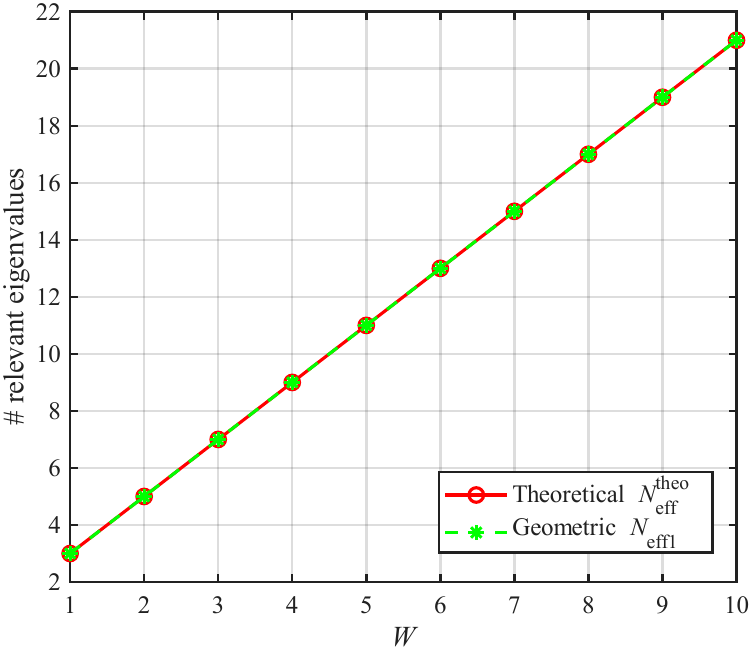}
\caption{Comparison of the effective rank estimated by the proposed geometric algorithm ($N_{\rm eff1}$) and the theoretical model ($N=40$).}\label{fig:rank_comparison}
\end{figure}

\begin{remark}
Theorem~\ref{thm:rank} provides profound insights into the design principles of FAS. It establishes the fundamental limit on the spatial DoF as $N_{\rm eff}^{\rm theo} = 2W+1$, a limit dictated by the aperture width $W$ due to the channel's finite spatial bandwidth. A key implication is the concept of port density saturation: once the number of ports $N$ is sufficient to capture these modes, further increasing $N$ provides redundant samples and yields negligible gains in effective rank. This provides a clear directive for system design, prioritizing the maximization of $W$. Finally, it is crucial to distinguish between this effective rank and spatial resolution. While a higher port density ($N > 2W+1$) does not increase the DoF, it can offer secondary benefits such as improved numerical conditioning.
\end{remark}

To validate both the theoretical framework and the proposed geometric algorithm, we present a numerical analysis in Fig.~\ref{fig:rank_comparison} and Table~\ref{tab:rank_analysis_summary}. Fig.~\ref{fig:rank_comparison} illustrates the remarkable accuracy of our geometric method, showing a near-perfect match with the theoretical $N_{\rm eff}^{\rm theo} = 2W+1$ limit across a range of $W$.

Table~\ref{tab:rank_analysis_summary} provides further quantitative validation for a high-density system ($N=100$). It confirms the linear growth of ${\it N}_{\rm eff1}$ with $W$ and highlights the principle of port saturation, noting that only a small fraction of ports (${\it N}_{\rm eff1}/N$) is required to capture the channel's full dimensionality. The most compelling evidence is the energy ratio captured by this compact subspace: the first ${\it N}_{\rm eff1}$ eigenvalues consistently contain over 97.8\% of the total channel energy. This demonstrates that our geometric algorithm not only finds a curve's turning point but also identifies the energy-dominant subspace of the channel.

\begin{table*}[]
\centering
\caption{Proposed Geometric-based Effective Rank $N_{\rm eff1}$ and Captured Energy Ratio for different $W$ ($N=100$)}
\label{tab:rank_analysis_summary}
\begin{tabular}{c||c|c|c|c|c|c}
\thickhline
\textbf{$W$} & $0.5$ & $1$ & $2$ & $4$ & $6$ & $8$ \\
\hline
\textbf{${\it N}_{\rm eff1}$} & $2$ & $3$ & $5$ & $9$ & $13$ & $17$ \\
\hline
\textbf{${\it N}_{\rm eff1}/N$} & $0.02$ & $0.03$ & $0.05$ & $0.09$ & $0.13$ & $0.17$ \\
\hline
\textbf{Energy Ratio (\%)} & $98.19$ & $97.88$ & $97.88$ & $98.08$ & $98.24$ & $98.35$ \\
\thickhline
\end{tabular}
\end{table*}
\begin{figure}[t]
  \centering
  \includegraphics[width=8cm]{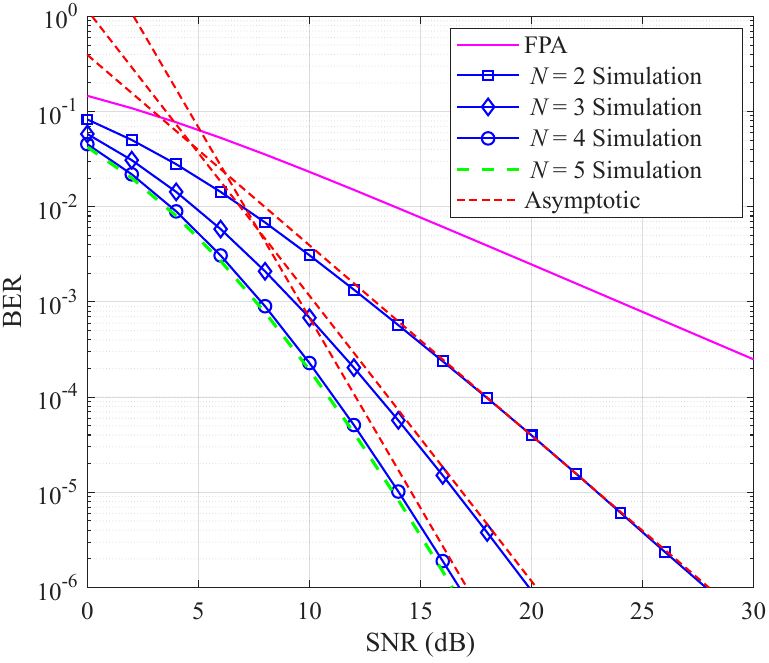}\\
  \caption{Validation of the asymptotic BER analysis for FAS with a normalized aperture $W=1$ and varying number of ports $N$. BPSK modulation is used. The solid lines represent simulation results, while the dashed lines correspond to the asymptotic theory derived in (\ref{PEclosed}). The performance of a single fixed-port antenna (FPA) is included as a benchmark.}
\label{fig:validation_N}
\end{figure}
\subsection{Diversity and Coding Gain Analysis}
Here, we provide a rigorous analysis of the diversity and coding gains, which are two pivotal performance metrics at high SNR. Our derivation is based on the asymptotic SER expression in Section \ref{sec:analysis}. This framework allows for a general analysis applicable to any FAS.

\subsubsection{Diversity Gain Derivation}
The diversity gain, denoted by $G_d$, quantifies the rate at which the error probability decreases as the SNR increases. It is formally defined as the negative slope of the asymptotic SER curve when plotted on a log-log scale. Mathematically, this is expressed as
\begin{equation}
G_{d} = -\lim_{\overline{\gamma}\rightarrow\infty}\frac{\log P_{E}(\overline{\gamma})}{\log\overline{\gamma}},
\end{equation}
where $P_E(\overline{\gamma})$ is the average SER as a function of $\overline{\gamma}$.

For correlated fading channels, the diversity order is determined by the number of independent spatial paths the channel can support, which corresponds directly to the effective rank of the channel correlation matrix, $N_{\rm eff} = {\rm Rank}\{\mathbf{J}\}$. Thus, to accurately model the system's behavior, the diversity order in the asymptotic SER expression from (\ref{PEclosed}) should be represented by $N_{\rm eff}$ instead of $N$. The resulting expression is
\begin{equation}\label{reffg}
    P_E(\overline{\gamma}) \approx \left( \frac{p(2N_{\rm eff}-1)!!}{2k^{N_{\rm eff}} \det(\mathbf{J}) \prod_{n=1}^{N_{\rm eff}}\overline{\gamma}_{n}} \right) (\overline{\gamma})^{-N_{\rm eff}}.
\end{equation}
For high-SNR analysis, we can group the terms that do not depend on $\overline{\gamma}$ into a single constant, $C$. The expression then simplifies to the form $P_E(\overline{\gamma}) \approx C \cdot (\overline{\gamma})^{-N_{\rm eff}}$.

Substituting this simplified asymptotic SER into the definition of diversity gain, we have
\begin{equation}
G_d = -\lim_{\overline{\gamma}\rightarrow\infty}\frac{\log(C \cdot \overline{\gamma}^{-N_{\rm eff}})}{\log\overline{\gamma}} = -\lim_{\overline{\gamma}\rightarrow\infty}\frac{\log C - N_{\rm eff}\log\overline{\gamma}}{\log\overline{\gamma}}.
\end{equation}
As $\overline{\gamma}\rightarrow\infty$, the term $\frac{\log C}{\log\overline{\gamma}}$ approaches zero. This leaves
\begin{equation}
G_d = -(-N_{\rm eff}) = N_{\rm eff} = {\rm Rank}\{\mathbf{J}\},
\end{equation}
where the diversity gain of FAS is equal to the effective rank of the channel correlation matrix. This result highlights that the diversity is fundamentally governed by the channel's effective DoF, not merely the number of available ports $N$.

\subsubsection{Coding Gain Derivation}
The coding gain, $G_c$, represents the system's power efficiency and corresponds to a horizontal shift of the error curve. It can be derived by expressing the asymptotic SER in the canonical form $P_E \approx (G_c \overline{\gamma})^{-G_d}$. Using our result $G_d = N_{\rm eff}$, this becomes
\begin{equation}
P_E \approx (G_c \overline{\gamma})^{-N_{\rm eff}} = G_c^{-N_{\rm eff}} (\overline{\gamma})^{-N_{\rm eff}}.
\end{equation}
We now equate this with the detailed SER expression derived from (\ref{reffg}) where $\prod_{n=1}^{N_{\rm eff}}\lambda_n$, and we assume the normalized average SNR across ports to be
\begin{equation}
P_{E}\approx\left[ \frac{p(2N_{\rm eff}-1)!!}{2 \cdot k^{N_{\rm eff}} \cdot \prod_{n=1}^{N_{\rm eff}}\lambda_n} \right] \cdot (\overline{\gamma})^{-N_{\rm eff}}.
\end{equation}
By comparing the coefficients of the $(\overline{\gamma})^{-N_{\rm eff}}$ term in both SER expressions, we establish the following identity
\begin{equation}
G_c^{-N_{\rm eff}} = \frac{p(2N_{\rm eff}-1)!!}{2 \cdot k^{N_{\rm eff}} \cdot \prod_{n=1}^{N_{\rm eff}}\lambda_n}.
\end{equation}
To solve for $G_c$, we raise both sides of the equation to the power of $(-1/r_{\rm eff})$ so that
\begin{equation}
G_c = \left( \frac{p(2N_{\rm eff}-1)!!}{2k^{N_{\rm eff}}} \right)^{-\frac{1}{N_{\rm eff}}} \cdot \left( \frac{1}{\prod_{n=1}^{N_{\rm eff}}\lambda_n} \right)^{-\frac{1}{N_{\rm eff}}}.
\end{equation}
This simplifies to the final expression for the coding gain
\begin{equation}
G_c = \left( \frac{2k^{N_{\rm eff}}}{p(2N_{\rm eff}-1)!!} \right)^{\frac{1}{N_{\rm eff}}} \left( \prod_{n=1}^{N_{\rm eff}}\lambda_n \right)^{\frac{1}{N_{\rm eff}}}.
\end{equation}
This shows that the coding gain is determined by the modulation parameters ($p, k$) and the geometric mean of the effective channel gains, which are the non-zero eigenvalues of the correlation matrix. This provides a complete and accurate characterization of the high-SNR performance.

\begin{figure*}[!t]
    \centering
    \subfloat[PSK scheme]{%
        \includegraphics[width=0.31\textwidth]{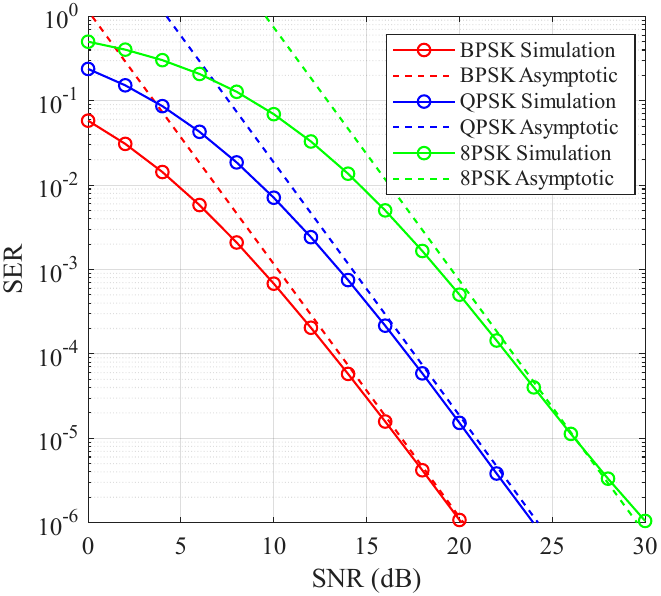}%
        \label{fig:psk}}
    \hfill
    \subfloat[QAM scheme]{%
        \includegraphics[width=0.31\textwidth]{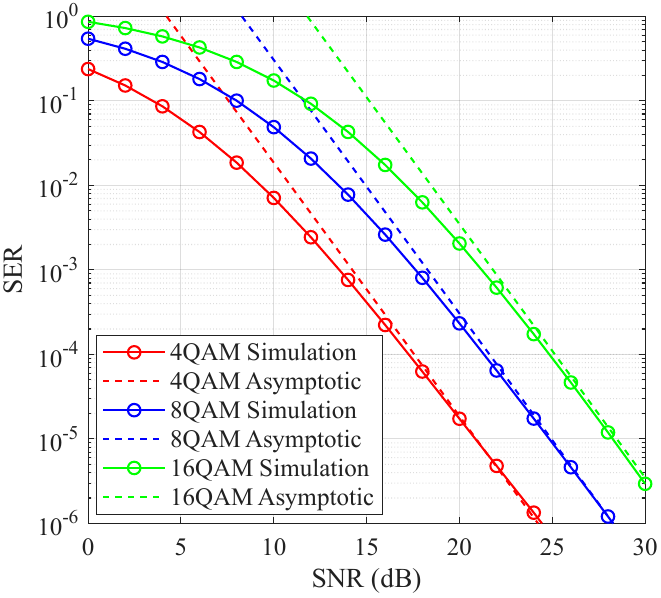}%
        \label{fig:qam}}
    \hfill
    \subfloat[PAM scheme]{%
        \includegraphics[width=0.31\textwidth]{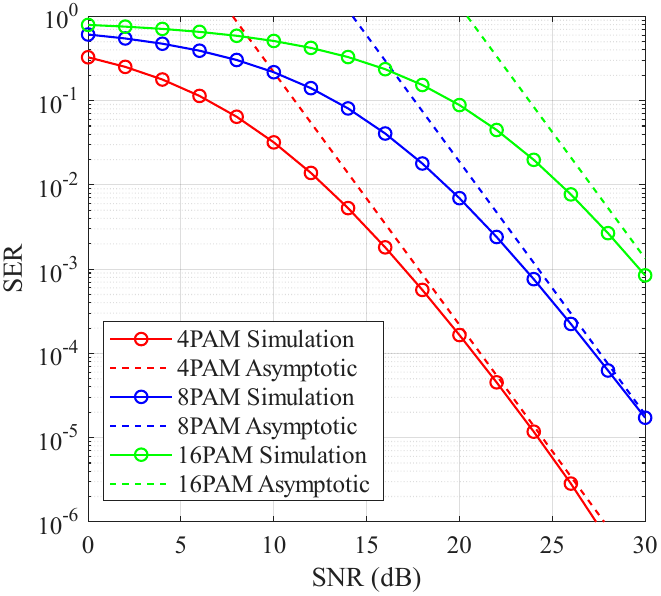}%
        \label{fig:pam}}
    \caption{SER performance comparison for various coherent modulation schemes under identical FAS parameters ($N=3$, $W=1$). The solid lines denote simulation results, and the dashed lines represent the corresponding asymptotic expressions. The parallel slopes of the curves in each subplot confirm that the diversity gain is independent of the modulation scheme.}
    \label{fig:modulation_compare}
\end{figure*}

\section{Results and Discussion}\label{sec:result}
In this section, we present numerical results to validate our theoretical framework and offer further insights into the performance of FAS. We compare Monte Carlo simulations with the derived asymptotic SER expressions to confirm their accuracy. The following analysis will focus on demonstrating the validity of the effective rank concept, quantifying the distinct effects of aperture width and port density, and verifying the robustness of our model across different modulation schemes.

Fig.~\ref{fig:validation_N} validates the accuracy of our asymptotic BER analysis by comparing it against simulations for an FAS with a fixed normalized aperture of $W=1$. The results are presented for a varying number of valid ports, $N \in \{2, 3, 4, 5\}$, using BPSK modulation. A key observation is the excellent agreement between the simulation results (solid curves) and the derived asymptotic expressions from (\ref{PEclosed}) (dashed lines), particularly in the high SNR regime. This convergence confirms the correctness of our analytical framework. Furthermore, the figure clearly illustrates the diversity gain achieved by increasing the number of available ports. As $N$ increases from $2$ to $5$, the BER curves become progressively steeper, indicating a higher diversity order. The slope of each curve in the high-SNR region directly corresponds to the diversity gain, which is equal to $N$ in these configurations, validating the result $G_d = N$ for a system where $N$ is less than or equal to the effective DoFs supported by the aperture. Even with just two valid ports ($N=2$), the FAS significantly outperforms a conventional FPA, underscoring the fundamental benefit of exploiting spatial diversity through position reconfigurability.

Fig.~\ref{fig:modulation_compare} studies the robustness and generality of our analytical framework by evaluating the SER performance for multiple coherent modulation schemes: PSK, QAM, and PAM. Here, the FAS parameters are set to $N=3$ and $W=1$. Across all subplots and for all modulation orders, the simulation results show remarkable consistency with the high-SNR asymptotic expressions. This validates the generalized SER formula in (\ref{PExpn}) and confirms the accuracy of the modulation-specific parameters $p$ and $k$ detailed in Table~\ref{tab:modulation_table}. A crucial insight from Fig.~\ref{fig:modulation_compare} is that for a fixed physical system configuration ($N, W$), the SER curves for different modulation orders are parallel to each other in the high-SNR region. This parallelism visually confirms a key theoretical finding of our work: the diversity gain ($G_d$) is determined solely by the effective rank of the channel correlation matrix, which depends on the physical parameters of the FAS, and is therefore independent of the modulation format. The horizontal shifts observed between the curves are attributed to the differences in coding gain ($G_c$), which, as derived in our analysis, is a function of the modulation parameters $p$ and $k$. This result rigorously decouples the effects of the physical channel structure (diversity gain) from the signal constellation design (coding gain).

\begin{figure}[t]
\centering
\includegraphics[width=8cm]{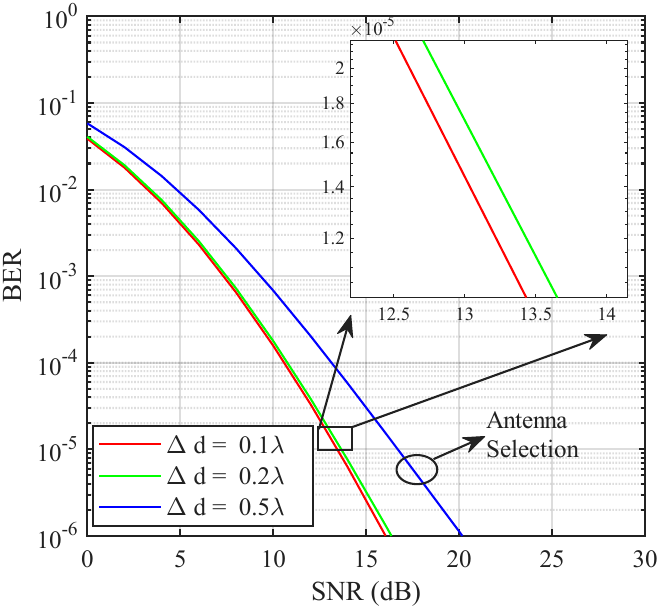}\\
\caption{Impact of port spacing $\Delta d$ on the FAS performance for a fixed aperture $W=1$. The number of ports are $N=3, 6, 11$, corresponding to spacings of $\Delta d = 0.5\lambda, 0.2\lambda, 0.1\lambda$, respectively.}\label{fig:port_spacing}
\end{figure}
\begin{figure}[t]
\centering
\includegraphics[width=.9\columnwidth]{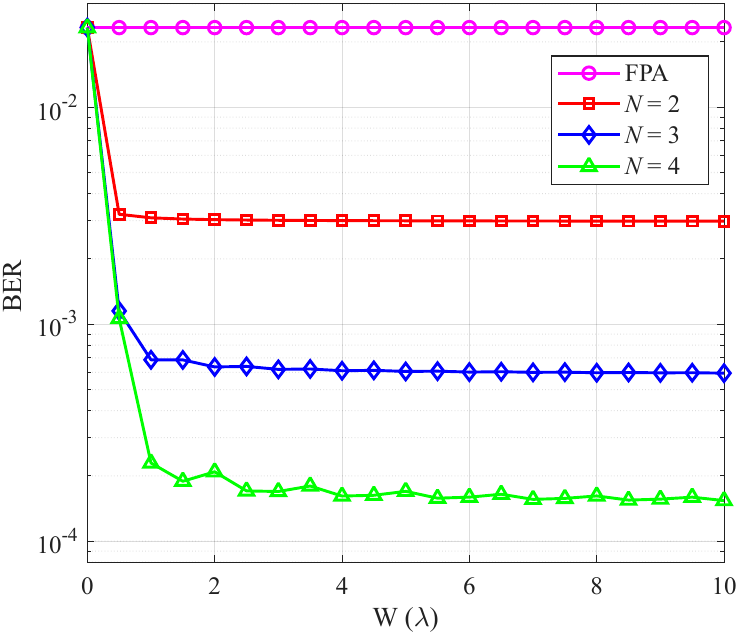}\\
\caption{Impact of the normalized aperture width $W$ on FAS performance at a fixed SNR of 10 dB for different numbers of ports $N$.}\label{fig:port_aperture}
\end{figure}

\begin{figure}[t]
\centering
\includegraphics[width=.9\columnwidth]{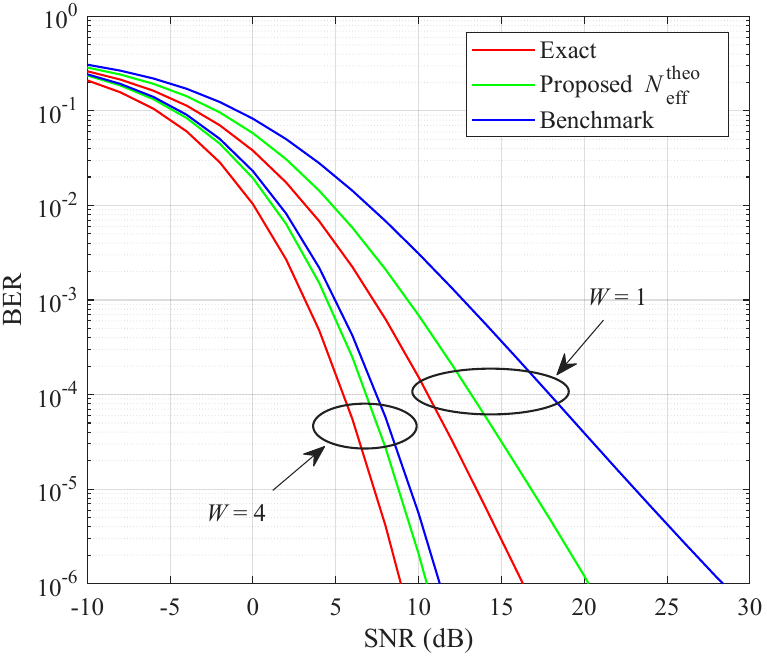}\\
\caption{Performance comparison of the proposed $N_{\rm eff}^{\rm theo}$ with the exact SER and the benchmark scheme from (19) in \cite{kham2023anew}, considering $N=100$ and two aperture widths: $W=1$ and $W=4$.}\label{fig:neff_framework}
\end{figure}

\begin{figure}[t]
\centering
\includegraphics[width=.9\columnwidth]{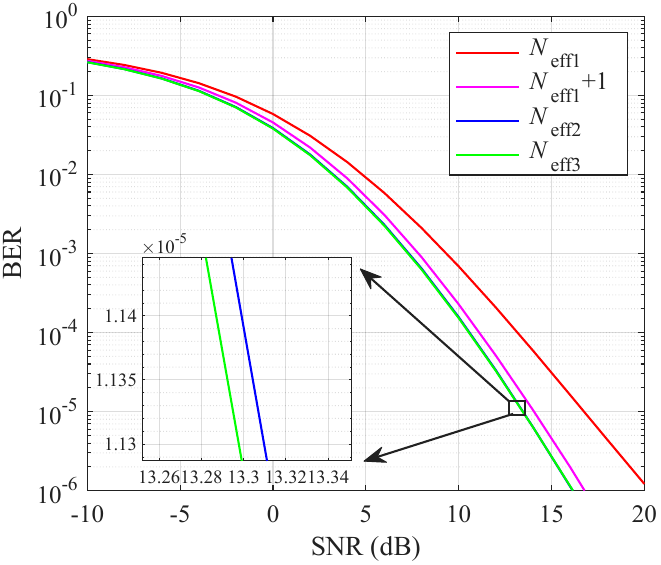}\\
\caption{Impact of the eigenvalue-based turning points on FAS performance for $W=1$. The BER is plotted for a system utilizing $N_{\rm eff1}$, $N_{\rm eff1}+1$, $N_{\rm eff2}$, and $N_{\rm eff3}$ ports, as identified by the geometric algorithm.}\label{fig:eigen_threshold}
\end{figure}

Now, Fig.~\ref{fig:port_spacing} examines the impact of port density on system performance within a fixed normalized aperture of $W=1$. We compare three configurations with varying port spacing: $\Delta d = 0.5\lambda$ ($N=3$), $\Delta d = 0.2\lambda$ ($N=6$), and $\Delta d = 0.1\lambda$ ($N=11$). The results show that decreasing the port spacing (i.e., increasing $N$) yields diminishing returns in BER. While moving from a spacing of $0.5\lambda$ to $0.2\lambda$ provides a noticeable gain, the gain from $0.2\lambda$ to $0.1\lambda$ is marginal. This behavior illustrates the concept of port density saturation, as predicted by Theorem~\ref{thm:rank}. Once the number of ports is sufficient to capture the essential spatial modes within the aperture (approximately $N \ge N_{\rm eff}^{\rm theo}=3$), adding more ports leads to higher spatial correlation and provides redundant channel information, resulting in negligible performance enhancement.\footnote{Evidently, this comment may not be valid if FAMA is concerned.}

In contrast to the limited gains from increasing port density, Fig.~\ref{fig:port_aperture} highlights the critical role of the aperture width $W$. The figure plots BER as a function of $W$ for a fixed SNR of 10 dB. For all configurations of $N$, the BER improves significantly as $W$ increases. This is because a larger aperture provides access to more diverse spatial paths, thereby increasing the channel's effective rank $N_{\rm eff}$. The performance gains, however, begin to saturate when $W$ becomes large enough such that the effective rank $N_{\rm eff}$ approaches the total number of available ports $N$. This observation reinforces a key design principle for FAS: maximizing the physical aperture is more effective for performance enhancement than merely increasing the density of ports within a constrained region.

\begin{figure*}[t]
\centering
    \subfloat{\includegraphics[width=.32\textwidth]{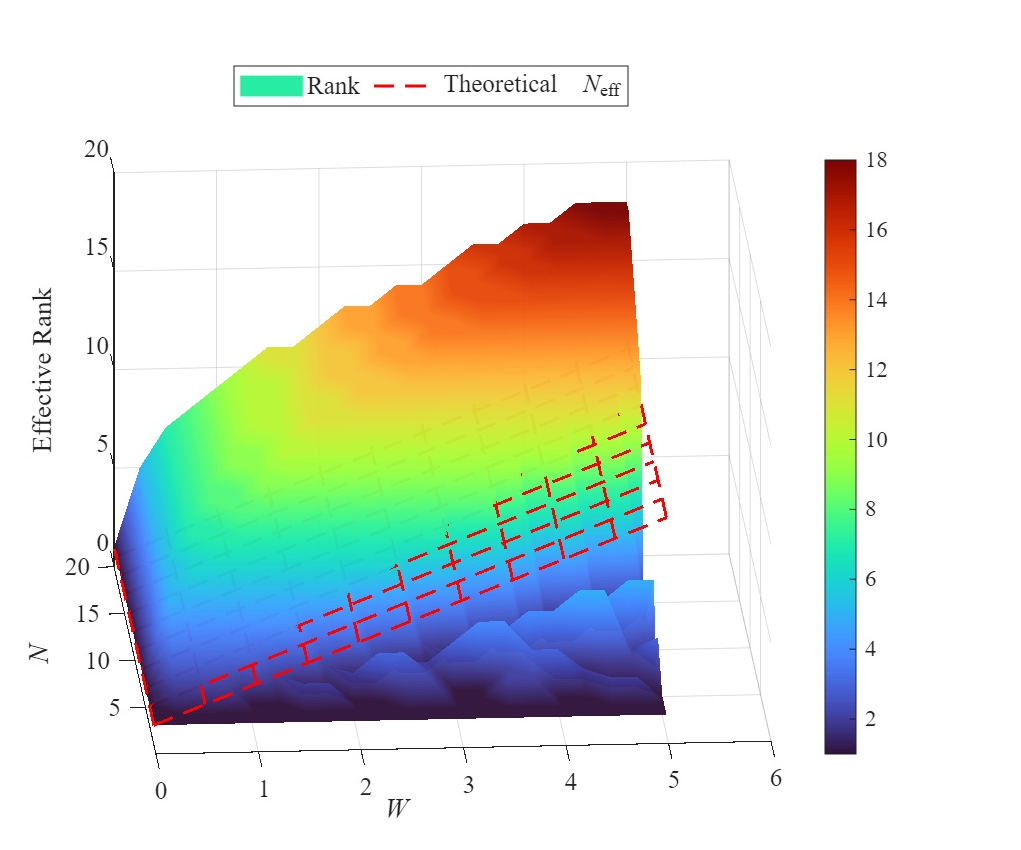}}
    \subfloat{\includegraphics[width=.32\textwidth]{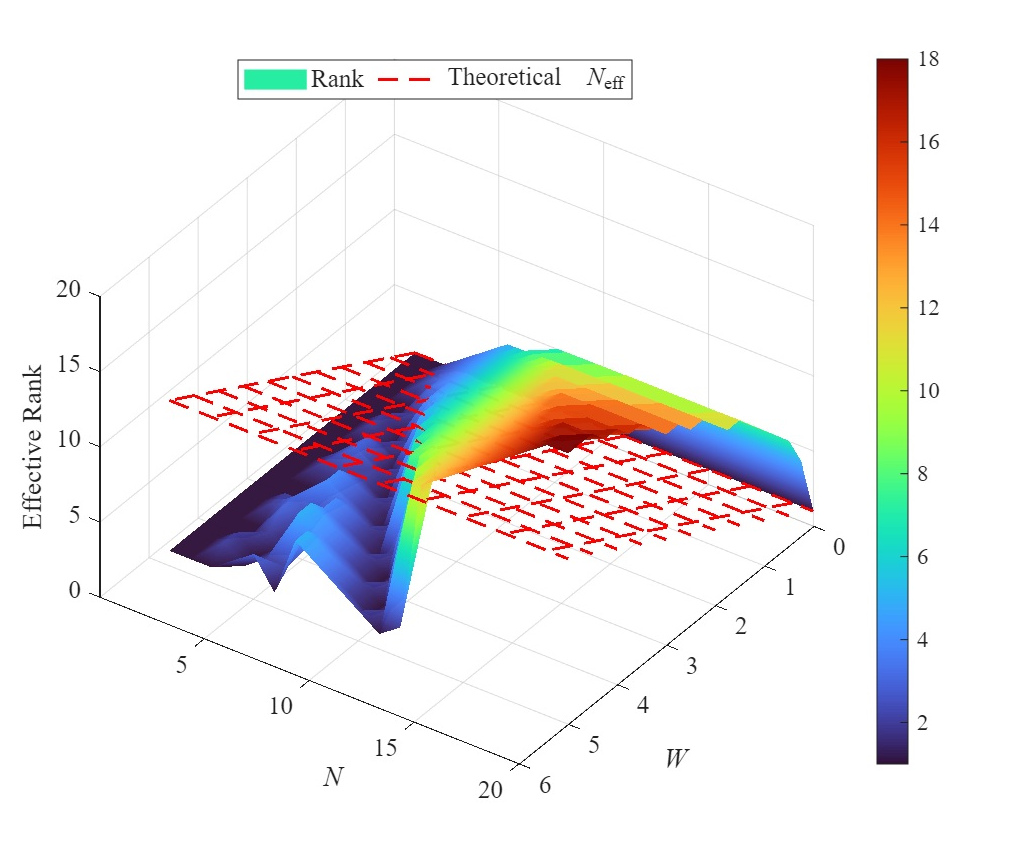}}
    \subfloat{\includegraphics[width=.32\textwidth]{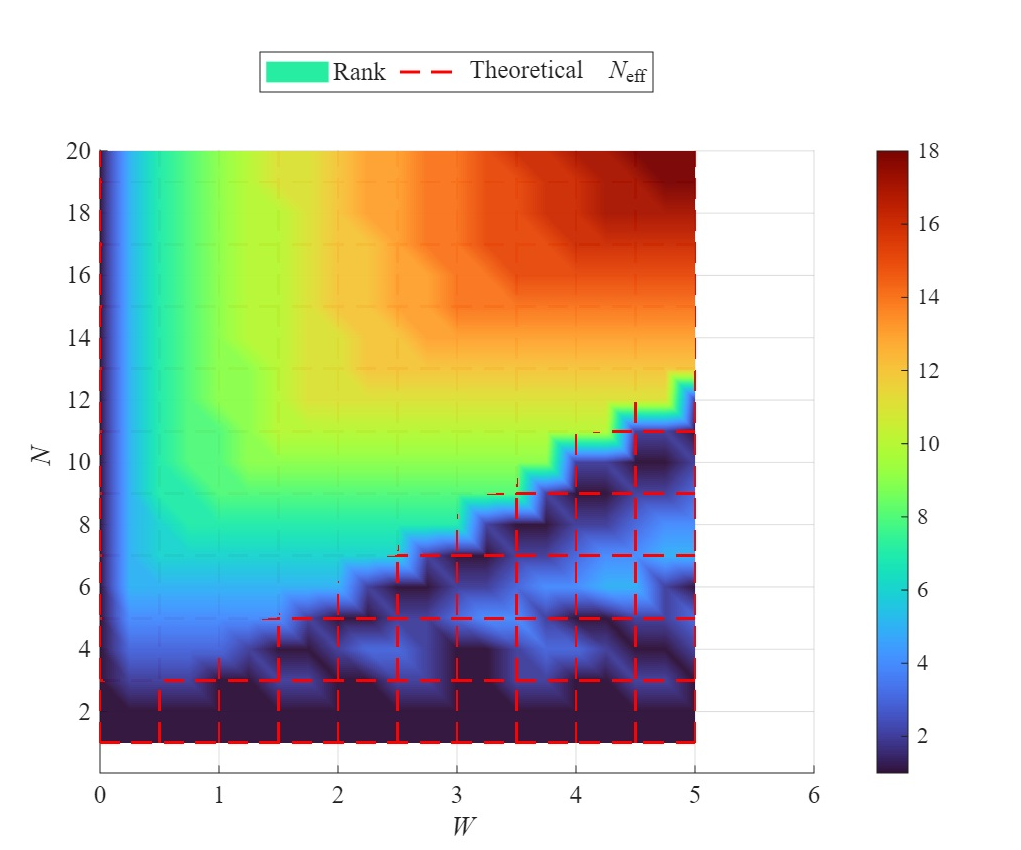}}
\caption{A comparative analysis of the effective rank estimated by the proposed geometric method versus the theoretical limit $N_{\rm eff} = 2W+1$. The surface plot shows the geometric rank as a function of $N$ and $W$, while the dashed red line indicates the theoretical value.}\label{fig:geo_vs_theory}
\end{figure*}
Fig.~\ref{fig:neff_framework} validates the proposed theoretical framework by comparing its predicted performance (Proposed theoretical $N_{\rm eff}^{\rm theo}$) against the exact BER and a benchmark scheme from \cite{kham2023anew}. The results are presented for two distinct aperture widths, $W=1$ and $W=4$. For the smaller aperture ($W=1$), our proposed framework provides a significantly tighter approximation to the exact BER curve than the benchmark model. For $W=4$, the performance gap between the two models narrows; however, our proposed framework still offers a visibly superior approximation to the exact BER. These results demonstrate that our proposed model provides a more accurate and robust performance prediction than the benchmark, particularly for smaller aperture widths where the benchmark is less precise. This validation confirms that using our effective rank in the  (\ref{eq:reff_final}) yields a more precise model for FAS performance.

Fig.~\ref{fig:eigen_threshold} provides a direct link between the key turning points identified by our geometric algorithm (Algorithm~\ref{alg:threshold_knee_descriptive}) and the actual system performance. For a system with $W=1$, we plot the BER when the number of selected ports corresponds to the identified turning points $N_{\rm eff1}$, $N_{\rm eff2}$, and $N_{\rm eff3}$. A significant performance gain is observed when moving from $N_{\rm eff1}$ ports to $N_{\rm eff1}+1$ ports. However, the performance curve for $N_{\rm eff2}$ is almost identical to that of $N_{\rm eff3}$ and shows only a marginal improvement over the $N_{\rm eff1}+1$ case. This confirms that the eigenvalues beyond the second turning point ($N_{\rm eff2}$) contribute negligibly to the overall performance. This result validates $N_{\rm eff2}$ as the ``saturation rank'', providing a practical and reliable metric for determining the minimum number of ports required to capture virtually all available diversity gain.


Finally, Fig. ~\ref{fig:geo_vs_theory} comprehensively validates the derived theoretical effective rank results by comparing the rank of our FAS port-dependent channels with the theoretical limit $N_{\rm eff}^{\rm theo}$. The surface plot shows the calculated effective rank of the FAS channel for various port numbers and aperture widths. The rank of the theoretical matrix, which is independent of $N$, is overlaid with a red dashed line. The visualization clearly shows that the rank of the FAS channel matrix closely follows the theoretical $N_{\rm eff}^{\rm theo}$ trend, especially when $N$ is sufficiently large. This remarkable consistency observed across different viewpoints confirms the validity of the theoretically derived results.
\section{Conclusions}\label{sec:conclude}
This paper presented the first rigorous error rate analysis for FAS, establishing a tight asymptotic error expression that reveals the fundamental scaling law between system performance and the port correlation matrix. Besides, we proposed a novel, dual-pronged approach to determine the channel's effective rank, which dictates its true dimensionality. In particular, this approach comprises a practical geometry-based algorithm, which uniquely identifies three distinct performance thresholds from the channel's eigenvalue spectrum, and a theoretical derivation proving that the effective rank converges to a fundamental limit determined by the physical size of the antenna's explorable space. We illustrated that this theoretical limit is equivalent to the principal threshold identified by our geometric method, thereby providing a rigorous theoretical foundation for the algorithm and validating its application. The proposed effective rank model has been demonstrated to be more accurate than existing methods in the literature. Our results have led to a key design insight: performance improvements are dictated by the expansion of the antenna's explorable space, as this directly increases the effective channel rank and achievable diversity gain. In contrast, increasing the density of potential antenna locations within a constrained space was shown to provide diminishing returns, yielding only marginal performance benefits beyond the saturation rank.

\begin{appendices}
\section{Proof of Lemma \ref{lemma1}}\label{sec:appA}
The PDF of a circularly symmetric complex Gaussian random variable $\mathbf{h}=[{h}_1,\ldots,{h}_N]^T$ can be characterized as
\begin{equation}\label{hvect}
f(\mathbf{h})=\frac{1}{\pi^N\det(\mathbf{R})}\exp\left(-\mathbf{h}^H\mathbf{Ah}\right),
\end{equation}
where $\mathbf{A}=\mathbf{R}^{-1}$. Each element of $\mathbf{A}$ can be represented in polar form as
$
A_{ij} = |A_{ij}| e^{j \phi_{A_{{ij}}}}.
$

Considering (\ref{eq2}) and the variable transformation in (\ref{hvect}), the joint PDF of $g_1, \phi_1, \dots, g_N, \phi_N$ is given by
\begin{equation}
\begin{aligned}
&f(g_1,\phi_1,\ldots,g_N,\phi_N)
\\
&
= \frac{\prod\limits_{n=1}^Ng_n\prod\limits_{n=1}^NH_n(g_n,\phi_n,\ldots,g_N,\phi_N)}{\pi^N\det(\mathbf{R})},
\end{aligned}
\end{equation}
where the term of $H_n(g_n,\phi_n,\ldots,g_N,\phi_N)$ denotes
\begin{equation}\label{Hnlar}
\begin{aligned}
&H_n(g_n,\phi_n,\ldots,g_N,\phi_N)\\=&\exp\left\{
-\mathbf{A}_{nn}g_n^2-2g_n\right.\\&\times\left.\left[
\sum_{i=n+1}^Ng_i|\mathbf{A}_{in}|\cos(\phi_n-\phi_i+\phi_{\mathbf{A}_{in}})
\right]
\right\}.
\end{aligned}
\end{equation}
To streamline notation, we denote $H_n(g_n, \phi_n, \ldots, g_N, \phi_N)$ simply as $H_n$ in the following discussion. Thus, we have
\begin{equation}
\begin{aligned}
&f(h_1, \ldots, h_N)\\=&\frac{\prod_{n=1}^Ng_n}{\pi^N\det(\mathbf{R})}
\underbrace{\int_{-\pi}^\pi\cdots\int_{-\pi}^\pi}_{N}\prod_{n=1}^NH_nd\phi_1\ldots d\phi_N.
\end{aligned}
\end{equation}
The operational principle of FAS involves activating a single port for signal reception from a large set of closely spaced candidate ports. The port switching criterion selects the port with the highest instantaneous SNR. The resulting effective SNR for the FAS, denoted as $\gamma_{\rm FAS}$, is therefore given by
\begin{equation}
\gamma_{\rm FAS}=\max(g_1^2,g_2^2,\ldots,g_N^2)=g_{\rm FAS}^2.
\end{equation}

Our strategy for determining the PDF of $\gamma_{\rm FAS}$ is to initially derive the PDF of its square root, $g_{\rm FAS}$. The target PDF is then found through a change of variables using the transformation $\gamma_{\rm FAS} = g_{\rm FAS}^2$. To tackle the PDF of $g_{\rm FAS}$, we begin with the CDF of $g_{\rm FAS}$ as
\begin{equation}
F(g_{\rm FAS})=\underbrace{\int_0^{g_{\rm FAS}}\cdots\int_0^{g_{\rm FAS}}}_{N}f(g_1,\ldots g_N)dg_1\ldots g_N.
\end{equation}
Based on this, the PDF of $g_{\rm FAS}$ can be evaluated as
\begin{equation}\label{fg_F}
\begin{aligned}
f (g_{\rm FAS})&=\frac{dF(g_{\rm FAS})}{dg_{\rm FAS}}\\
=&N\underbrace{\int_0^{g_{\rm FAS}}\cdots\int_0^{g_{\rm FAS}}}_{N-1}f(g_1,\ldots,g_{\rm FAS})dg_1\ldots dg_{N-1}\\
=&\frac{Ng_{\rm FAS}}{\pi^N\det(\mathbf{R})}\underbrace{\int_{-\pi}^{\pi}\cdots\int_{-\pi}^{\pi}}_{N}
H_NGd\phi_1\ldots\phi_N,
\end{aligned}
\end{equation}
where the term $G$ encapsulates the $N-1$ nested integrals over the channel gains $g_1,\ldots,g_{N-1}$ as
\begin{equation}
\begin{aligned}
G=&\int_0^{g_{\rm FAS}}H_{N-1}\int_0^{g_{\rm FAS}}H_{N-2}\\
&\cdots\left(\int_0^{g_{\rm FAS}}H_2\left(\int_0^{g_{\rm FAS}}H_1g_1dg_1\right)g_2dg_2\right).
\end{aligned}
\end{equation}
To evaluate this integral, we analyze its asymptotic behavior in the regime where $g_{\rm FAS}\to 0$. This allows us to leverage the following key approximations, which are derived from (\ref{Hnlar}) as
\begin{equation}
\int_0^{g_{\rm FAS}}H_n(g_n,\ldots,g_N)g_ndg_n=\frac{g_{\rm FAS}^2}{2}+o(g_{\rm FAS}^2),
\end{equation}
and
\begin{equation}\label{HN1}
H_N=1+o(1),\quad  {\rm for} \quad  g_{\rm FAS}\to 0.
\end{equation}
Starting from the innermost integral, we recursively solve the nested integrals of $G$. Applying the first approximation, we get
\begin{equation}
    \int_{0}^{g_{{\rm FAS}}}H_{1}g_{1}dg_{1} \approx \frac{g_{\rm FAS}^2}{2}.
\end{equation}
This result is then substituted into the next layer of the integral. Since the term $\frac{g_{\rm FAS}^2}{2}$ is independent of the integration variable $g_2$, it can be factored out. Applying the same approximation to the remaining integral yields
\begin{equation}
\begin{aligned}
      \int_{0}^{g_{{\rm FAS}}}H_{2}\left(\frac{g_{\rm FAS}^2}{2}\right)g_{2}dg_{2} &\approx \frac{g_{\rm FAS}^2}{2} \int_{0}^{g_{{\rm FAS}}}H_{2}g_{2}dg_{2} \\&\approx \left(\frac{g_{\rm FAS}^2}{2}\right)^2.
\end{aligned}
\end{equation}
By extrapolating this logic to the entire set of $N-1$ nested integrals, we arrive at the approximation for
\begin{equation}
\begin{aligned}
    G &= \int_{0}^{g_{\rm FAS}}H_{N-1} \cdots \left(\int_{0}^{g_{\rm FAS}}H_{1}g_{1}dg_{1}\right) \cdots g_{N-1}da_{N-1} \\&\approx \left(\frac{g_{\rm FAS}^2}{2}\right)^{N-1}.
\end{aligned}
\end{equation}
With the asymptotic approximations for $G$ and $H_N$ established, we can finalize the derivation of $f(g_{\rm FAS})$. Substituting these results into (\ref{fg_F}) yields
\begin{equation}
\begin{aligned}
    f(g_{\rm FAS}) \approx &\frac{N g_{\rm FAS}}{\pi^{N} \det(\mathbf{R})} \\&\times\underbrace{\int_{-\pi}^{\pi} \cdots \int_{-\pi}^{\pi}}_{N} (1) \left[ \left(\frac{g_{\rm FAS}^2}{2}\right)^{N-1} \right] d\phi_{1}\cdots d\phi_{N}.
\end{aligned}
\end{equation}
The integrand is now independent of the phase variables, allowing the $N$-fold integral to be readily evaluated as $(2\pi)^N$. Consequently, the expression for $f(g_{\rm FAS})$ simplifies to
\begin{equation}
\begin{aligned}
    f(g_{\rm FAS}) \approx \frac{N g_{\rm FAS}}{\pi^{N} \det(\mathbf{R})} \left(\frac{g_{\rm FAS}^2}{2}\right)^{N-1} (2\pi)^N = \frac{2N g_{\rm FAS}^{2N-1}}{\det(\mathbf{R})}.
\end{aligned}
\end{equation}
Having derived the asymptotic PDF of $g_F$, we proceed to determine the PDF of the effective SNR, $\gamma_{\rm FAS}$. This is accomplished by applying the change of variables technique to the transformation $\gamma_{\rm FAS} = g_{\rm FAS}^2$. The PDF of the transformed variable is given by
\begin{equation}
    f(\gamma_{\rm FAS}) = f(g_{\rm FAS}) \left| \frac{dg_{\rm FAS}}{d\gamma_{\rm FAS}} \right|_{g_{\rm FAS} = \sqrt{\gamma_{\rm FAS}}}.
\end{equation}
The Jacobian of this transformation is $|dg_F/d\gamma_{\rm FAS}| = 1/(2\sqrt{\gamma_{\rm FAS}})$ for $\gamma_{\rm FAS} > 0$. Substituting the expressions for $f(g_{\rm FAS})$ and the Jacobian, we arrive at the final result
\begin{equation}\label{frfR1}
\begin{aligned}
    f(\gamma_{\rm FAS}) &= \left[ \frac{2Ng_{\rm FAS}^{2N-1}}{\det(\mathbf{R})} \right]_{g_{\rm FAS} = \sqrt{\gamma_{\rm FAS}}} \cdot \frac{1}{2\sqrt{\gamma_{\rm FAS}}} \\
    &= \frac{2N(\gamma_{\rm FAS})^{N - 1/2}}{\det(\mathbf{R})} \cdot \frac{1}{2\gamma_{\rm FAS}^{1/2}} \\
    &= \frac{N\gamma_{\rm FAS}^{N-1}}{\det(\mathbf{R})}.
\end{aligned}
\end{equation}
Recalling from (\ref{eq:corr_matrix}), the PDF of (\ref{frfR1}) can then be rewritten as
\begin{equation}\label{frfR2}
f(\gamma_{\rm FAS}) = \frac{N\gamma_{\rm FAS}^{N-1}}{\det(\mathbf{J}) \prod_{n=1}^N \bar{\gamma}_n}.
\end{equation}
This expression provides the sought-after PDF for $\gamma_{\rm FAS}$ and thus concludes the proof.

\section{Proof of Theorem \ref{thm:rank}}\label{sec:appB}
To facilitate subsequent calculations, we convert the 2D correlation coefficient indices of (\ref{eq5j}) into a one-dimensional index represented as
\begin{equation}
J_k = J_0\left(2\pi W \frac{k}{N-1}\right),
\end{equation}
where $k\in[-N+1,N-1]$.

In the asymptotic regime as $N \rightarrow \infty$, the normalized lag $\tau = \frac{k}{N-1}$ becomes a continuous variable on $[-1,1]$, and the Toeplitz structure ensures that its generating function governs the spectral properties of $\mathbf{J}$, also called the symbol,
\begin{equation}
f(\theta) = \sum_{k=-\infty}^{\infty} J_k e^{-j k \theta}.
\end{equation}
For the chosen sequence $J_k$, the symbol $f(\theta)$ is the discrete-time Fourier transform (DFT) of the sampled Bessel correlation. Because $J_0(\cdot)$ oscillates while decaying only slowly, the resulting spectrum concentrates almost all of its energy within a finite angular-frequency band.

By employing the continuous approximation for large $N$, we have
\begin{equation}
J_k \approx J_0(2\pi W \tau), \qquad \tau \in [-1,1],
\end{equation}
and thus the symbol can be approximated by the integral
\begin{equation}
f(\theta) \approx (N-1)\int_{-1}^{1} J_0(2\pi W \tau) e^{-j (N-1)\tau \theta} d\tau.
\end{equation}
Noting the properties of the Bessel function and applying the Fourier transform relationship, it can be shown (see, e.g., \cite{ser1998on}) that $f(\theta)$ is approximately supported on $|\theta| \leq W$, i.e.,
\begin{equation}
f(\theta) \approx
\begin{cases}
C, & |\theta| \leq W, \\
0, & \text{otherwise},
\end{cases}
\end{equation}
where $C=\pi/W$.
This bandlimited property is a manifestation of the classical spatial bandwidth limitation.

In the large $N$ regime, Szeg\H{o}'s strong limit theorem states that the empirical distribution of the eigenvalues $\{\lambda_n\}$ of $\mathbf{J}$ converges weakly to the distribution induced by the symbol $f(\theta)$ over $[0,2\pi]$. Specifically, for any continuous function $F$,
\begin{equation}
\lim_{N \to \infty} \frac{1}{N} \sum_{n=1}^N F(\lambda_n) = \frac{1}{2\pi} \int_{0}^{2\pi} F(f(\theta)) d\theta.
\end{equation}
Hence, the normalized eigenvalue distribution is given by
\begin{equation}
p(\theta) = \frac{f(\theta)}{\frac{1}{{2\pi}}\int_{0}^{2\pi} f(\theta') {d\theta'}},
\end{equation}
and the effective rank is characterized as the exponential of the spectral entropy
\begin{equation}
N_{\mathrm{eff}}^{\rm theo} = \exp\left( - \frac{1}{2\pi}\int_{0}^{2\pi} p(\theta) \log p(\theta) d\theta\right).
\end{equation}
Given the approximate rectangular form of $f(\theta)$, we can compute the normalization as
\begin{equation}
\frac{1}{2\pi}\int_{0}^{2\pi} f(\theta) d\theta=1,
\end{equation}
so that
\begin{equation}
p(\theta)=
\begin{cases}
\displaystyle \frac{\pi}{\,W}, & |\theta|\le W,\\
0,&\text{otherwise}.
\end{cases}
\end{equation}
Because the \emph{angular}  power spectral density is rectangular, exactly \(2W+1\) discrete spatial harmonics with \(|k|\le W\) carry equal non-zero power, while all others are identically zero.  The discrete entropy can be characterized as
\begin{equation}
H_{\mathrm d}
   = -\sum_{k=-W}^{W} \frac{1}{2W+1}\log\frac{1}{2W+1}
   = \log(2W+1).
\end{equation}
Hence, the effective rank can be obtained as
\begin{equation}
N_{\mathrm{eff}}^{\rm theo} = e^{H_{\mathrm d}} = 2W + 1.
\end{equation}
At this time, the proof is completed.
\end{appendices}

\end{document}